\documentclass[12pt]{article}

%%%%%%%%%%%%%%%%%%%%%%%%%%%%%%%%% TITEL
\title{\textcolor{black}{Loss aversion in strategy-proof school-choice mechanisms}\thanks{An older version was circulating under the title ``school choice and loss aversion." We thank Georgy Artemov, In\'acio B\'o, Rustamdjan Hakimov, Fabian Herweg, Peter Katu\v{s}\v{c}\'{a}k, Dorothea K\"ubler, Matthias Lang, Takeshi Murooka, Antonio Rosato, Maybritt Schillinger, Roland Strausz, Georg Weizs\"acker, and seminar participants in Aachen, Berlin, and Munich, as well as at CED'19, ESEM'19, MIP'19, VfS'19, AMETS, and CMID'20 for useful comments and suggestions. Financial support by Deutsche Forschungsgemeinschaft (CRC/TRR 190, project 280092119) and UniCredit Foundation (Modigliani Grant) is gratefully acknowledged.}}
\author{Vincent Meisner\thanks{Technical University Berlin, Stra\ss{}e des 17. Juni 135, 10623 Berlin, Germany, vincent.meisner@tu-berlin.de.} \and  Jonas von Wangenheim\thanks{University of Bonn, Institute for Microeconomics, Adenauerallee 24-42, 53113 Bonn, Germany, jwangenheim@uni-bonn.de.}}

\date{July 15, 2022}

\usepackage{xcolor}
\usepackage{colortbl}
\usepackage{booktabs}
\usepackage{bbm}
\usepackage{hhline}
\usepackage{amsmath}
\usepackage{array}
\usepackage{amssymb}
\usepackage{amsthm}
 \usepackage{arcs}
\usepackage[latin1]{inputenc}
\usepackage{parskip}
\usepackage{sgame}
\usepackage{color}
\usepackage{cancel}
\usepackage{hyperref}
\usepackage{wrapfig}
\usepackage{graphics}
\usepackage{graphicx}
\usepackage{tikz,pgfplots}
\usepackage{verbatim}
\usepackage[round]{natbib}
\usepackage{subfig}
\usepackage{gensymb}
\usepackage{pdfpages}
\usepackage[paper=a4paper,margin=1in,top=1in,bottom=1in]{geometry}
\usepackage[onehalfspacing]{setspace} %bei Bib noch verkleinern
\usepackage{thmtools, thm-restate}
\usepackage{cancel}
\usepackage{titlesec}
\usepackage{enumerate}
\usepackage{nicefrac}
\usepackage{adjustbox}
\usepackage{pdflscape}
\usepackage{rotating}
\usepackage[normalem]{ulem}

\titlespacing{\section}{0pt}{\parskip}{-\parskip+0.5em}
\titlespacing{\subsection}{0pt}{\parskip}{-\parskip+0.5em}
\titlespacing{\subsubsection}{0pt}{\parskip}{-\parskip+0.5em}
    \makeatletter
    \g@addto@macro\normalsize{\setlength\abovedisplayskip{6pt}}
    \g@addto@macro\normalsize{\setlength\belowdisplayskip{6pt}}
    \makeatother

% lowercase calligraphy
%\usepackage{boondox-calo}

\setlength{\bibsep}{4.0pt}

\newtheoremstyle{slplain}% name
  {.5\baselineskip\@plus.2\baselineskip\@minus.2\baselineskip}% Space above
  {.5\baselineskip\@plus.2\baselineskip\@minus.2\baselineskip}% Space below
  {\slshape}% Body font
  {}%Indent amount (empty = no indent, \parindent = para indent)
  {\bfseries}%  Thm head font
  {.}%       Punctuation after thm head
  { }%      Space after thm head: " " = normal interword space;
        %       \newline = linebreak
  {}%       Thm head spec

\theoremstyle{slplain}
\newtheorem{prop}{Proposition}
\newtheorem{lem}{Lemma}

\newtheorem{definition}{Definition}
\newtheorem{ass}{Assumption}
\newtheorem{cor}{Corollary}
\newtheorem{example}{Example}

% ABKÜRZUNGEN FÜR EQ -
\newcommand{\ea}[1]{\begin{align*}#1\end{align*}}
\newcommand{\eq}[1]{\begin{equation}#1\end{equation}}
\newcommand{\ean}[1]{\begin{align}#1\end{align}}

\allowdisplaybreaks

\usetikzlibrary{positioning,arrows,patterns,decorations.pathreplacing}
\tikzset{
  state/.style={circle,draw,minimum size=6ex},
  arrow/.style={-latex, shorten >=1ex, shorten <=1ex}}

%% code from mathabx.sty and mathabx.dcl
\DeclareFontFamily{U}{mathx}{\hyphenchar\font45}
\DeclareFontShape{U}{mathx}{m}{n}{
      <5> <6> <7> <8> <9> <10>
      <10.95> <12> <14.4> <17.28> <20.74> <24.88>
      mathx10
      }{}
\DeclareSymbolFont{mathx}{U}{mathx}{m}{n}
\DeclareFontSubstitution{U}{mathx}{m}{n}
\DeclareMathAccent{\widecheck}{0}{mathx}{"71}
\DeclareMathAccent{\wideparen}{0}{mathx}{"75}

\pgfplotsset{soldot/.style={color=black,only marks,mark=*}} \pgfplotsset{holdot/.style={color=black,fill=white,only marks,mark=*}}

%%%%%%%%%%%%%%%%%%%%%%%%%%%%%%%%% ANFANG
\begin{document}

%%%%%%%%%%%%%%%%%%%%%%%%%%%%%%%%% TITELSEITE
\maketitle
\begin{abstract} \noindent
Evidence suggests that participants in \textcolor{black}{strategy-proof matching} mechanisms play dominated strategies. To explain the data, we introduce expectation-based loss aversion into a school-choice setting and characterize choice-acclimating personal equilibria. We find that non-truthful preference submissions can be strictly optimal if and only if they are top-\textcolor{black}{rank} monotone. In equilibrium, \textcolor{black}{inefficiency or justified envy} may \textcolor{black}{arise in seemingly stable or efficient mechanisms}. Specifically, students who are more loss averse or less confident than their peers \textcolor{black}{obtain suboptimal allocations.}
\end{abstract}
\noindent
JEL-Classification: C78, D47, D78, D81, D82, D91. 
%c78 matching, d78 Positive Analysis of Policy Formulation and Implementation, d82 mech design
\\
Keywords: Market design, Matching, School choice, Reference-dependent preferences, Loss aversion,
Deferred acceptance.

\newpage

%\newpage

\section{Introduction} \label{sec:intro}

% MOTIVATION

\textcolor{black}{Strategy-proof mechanisms} offer a celebrated solution to the problem of matching prospective students to schools.
\textcolor{black}{For instance, deferred-acceptance (DA) mechanisms implement stable allocations, and top-trading cycles (TTC) implement efficient allocations}. 
 Consequently, \textcolor{black}{such mechanisms are} used in many existing school choice programs.\footnote{For instance, \citet{pathak2013} provide many examples.}
By submitting their true preferences, students can maximize the probability of getting into their
most preferred school without hurting their chances of admission to other schools.
Unfortunately, growing evidence %\footnote{See, e.g., \citet{hassidim2017note}.}
from both the field and the lab suggests
that (especially, but not only) students with low priority tend to conceal their preferences for popular schools
and mimic preferences for district schools despite
the dominance of the truthful \textcolor{black}{reporting}. Hence, potentially, none of the desired properties \textcolor{black}{like efficiency and stability} may be obtained.

% BASIC IDEA, INTUITION, CONTRIBUTION 1: rationalize deviations

We explain this puzzle with expectation-based loss aversion (EBLA, \cite{koszegi2006,koszegi2007}).
In our framework, the preference report is a channel to manipulate the expectations to which final match outcomes are compared.
Ranking a popular school behind a less preferred school 
is always costly in terms of the expected match utility, as such a rank-ordered list (ROL) shifts 
a part of the match probability
to an inferior school. However, it also mitigates disappointment, and
not even trying to get into the popular school by dropping it completely shields off potential disappointment \textcolor{black}{with respect to admisssion at this school}.
We characterize that ROLs are strictly rationalizable as a choice-acclimating equilibrium (CPE) in \textcolor{black}{static strategy-proof mechanisms} if and only if they satisfy a property we call top-\textcolor{black}{rank} monotonicity, which is a testable prediction.\footnote{\textcolor{black}{An ROL
is top-rank monotone if the rank of all schools preferred over the first-ranked school is decreasing in the true rank and 
the rank of the other schools is increasing in the true rank.} Such ROLs are indeed common in the data by \citet{li2017}.}
This theoretical foundation of commonly observed deviations is the first contribution of this paper.

% Contribution 2: strategic interaction, elite schools, effect matters also in large markets

\textcolor{black}{Secondly}, we show that these misrepresentations may give rise to justified envy and inefficiency in equilibrium. We  analyze choice-acclimating Bayesian Nash equilibria when heterogeneously loss-averse students
compete for scarce seats at elite schools. More specifically, loss-averse students decide to apply to their district schools
over the elite schools if they are pessimistic about their admission chances. Consequently, weaker students with a lower degree of loss aversion (or higher degree of confidence), who submit true preferences, are accepted instead.
In that sense, \textcolor{black}{strategy-proofness} does not ``level the playing field," voiding one of the crucial advantages prominently named by \citet{pathak2008}.
Our model also highlights a flaw in the empirical strategy to identify preferences reported to \textcolor{black}{strategy-proof mechanisms} as true.
Regarding affirmative action policy, this insight is important because the observation that certain students
do not rank certain schools does not necessarily mean that they prefer other schools.

% details1: MODEL IDEA: informal description

In our model, students privately learn their match values for each school and
their individual degree of loss aversion.
Moreover, they receive a signal about their relative priorities compared to the other students
at each school.
%For some of our results, we consider a particular form of this signal, i.e.,
%each student is endowed with a one-dimensional score and schools simply prefer students with higher
%scores.\footnote{For instance, the score may represent the result of a general assessment test,
%such as the SAT or GRE. In many countries and cities, all schools use the same centralized score to rank students.
%See \citet[Table 1]{fack2019}}
Generally, given beliefs about the other students' priorities and strategies,
a student's preference report corresponds to a lottery over match outcomes.
For instance, by swapping two schools' ranks in the reported ROL, match probability mass
is shifted from one school to the other.
With respect to match utility alone, truthful reporting is a dominant strategy and,
thus, induces a lottery that first-order stochastically dominates any lottery
induced by any other ROL.
Following the CPE framework by \citet{koszegi2007}, the chosen outcome lottery constitutes the reference point.
In addition to match utility, students receive psychological utility from comparing an outcome to the reference point. Since losses with respect to the reference point are weighted stronger than gains, any uncertainty in the match utility distribution generates a cost in expected utility.

% details2: idea of disappointment

As \citet{koszegi2007} have already proved, CPE allows for a preference for
stochastically dominated lotteries, if an agent's loss aversion is sufficiently strong.
Indeed, a loss-averse student may prefer to be matched with school $x$ with certainty
over being matched with the same school $x$ with probability $(1-\epsilon)$ and being matched with
an even better school $y$ with probability $\epsilon>0$.
Intuitively, the mere possibility of getting into $y$ makes
the realization of the more likely outcome $x$ more painful.
Not listing $y$ abandons all hope so that this school does not enter the stochastic reference point
and disappointment is avoided.
Such motifs can explain the evidence, suggesting that
low- and mid-priority students are prone to misrepresentations, but high-priority and optimistic students are not.

%%%%%%%%%%%%%%%%%%%%%%%%%%%%%%%%%%%%%%%%%%%%
%%%%%%%%%%%%%%%%%%%%%%%%%%%%%%%%%%%%%%%%%%%%
%%%%%%%%%%%%%%%%%%%%%%%%%%%%%%%%%%%%%%%%%%%%
%%%%%%%%%%%%%%%%%%%%%%%%%%%%%%%%%%%%%%%%%%%%
%%%%%%%%%%%%%%%%%%%%%%%%%%%%%%%%%%%%%%%%%%%%
%%%%%%%%%% L I T
%%%%%%%%%%%%%%%%%%%%%%%%%%%%%%%%%%%%%%%%%%%%
%%%%%%%%%%%%%%%%%%%%%%%%%%%%%%%%%%%%%%%%%%%%
%%%%%%%%%%%%%%%%%%%%%%%%%%%%%%%%%%%%%%%%%%%%
%%%%%%%%%%%%%%%%%%%%%%%%%%%%%%%%%%%%%%%%%%%%
%%%%%%%%%%%%%%%%%%%%%%%%%%%%%%%%%%%%%%%%%%%%

% \section{Related Literature} \label{sec:literature}
% matching
We draw on the extensive literature on matching mechanisms, but
depart from the standard framework where preferences are only ordinal. In their seminal paper, \citet{gale1962} introduce
the deferred-acceptance mechanism as a solution to find the optimal stable matchings for the proposing side in the one-to-one matching problem. \textcolor{black}{The dominance of the truthful strategy for proposers in DA and TTC mechanisms was established by \citet{roth1982,roth1982ttc}.}
\citet{balinski1999} show that DA is constrained efficient in the sense that
no other fair mechanism Pareto-dominates it. Our model introduces a fundamentally different structure of incentives and questions
all of these classical insights. \citet{roth1989} and \citet{ehlers2007} are the first
to study matching with incomplete information.

% behavioral matching

% matching: evidence of deviations from truthful reporting, field data

\citet{hassidim2017note} gather stylized facts about the pervasive misrepresentation of preferences in truthful mechanisms. Similar to \citet{rees2018a} and \citet{chen2019} who analyze survey data, %\footnote{They
%study the National Resident Matching Program and the Mexico City high school match, respectively.}
they find that ``misrepresentation rates are higher in
weaker segments of markets" %p.220 oben rechts
and increase ``when applicants expect to face stronger competition," in line with the predictions of our model. % ein Absatz drunter am Ende
In field data, misrepresentations are hard to identify since the true preferences are not observable.
However, \citet{hassidim2017ipmm}, \citet{shorrer2017} and \citet{artemov2020}
exploit objective rankings in their data
to expose ``obvious misrepresentations"
and find the same pattern.\footnote{They study the
Israeli Psychology Master's Match, Hungarian college admission, and Australian college admissions,
respectively. Naturally, all students should prefer a school with a scholarship over the same school
without scholarship, but the authors record that students forgo tuition waivers and no-strings-attached stipends.} 
\citet{artemov2020} and \citet{hassidim2017ipmm} find that 1\,-\,20\% 
and 2--8\% of obvious misrepresentations are ex-post costly, respectively. 
\citet{shorrer2017} estimate that the 12--19\% costly obvious misrepresentations
amount to \$3,000--\$3,500 on average (unconditionally \$347--\$738 per misrepresentation).
That is, even when restricting attention to obvious misrepresentations, consequential deviations can be observed.

% matching: experiments

Truthfulness is easier to detect in the lab where preferences are induced  by the experimental design.
While the pioneers \citet{chen2006} focused on a comparison of different mechanisms,
more recently researchers have investigated patterns in preference manipulations.
\citet{hakimov2020} provide a well-structured overview of the current state of experimental research on matching markets. They document that rates of truthfulness seem to depend on multiple factors which should not impede the dominance of the strategy
and which vary widely between studies. %\footnote{For example, only 38\% in the first rounds of \citet{bo2020experiment}
%and 88\% in the zero-information treatment of \citet{pais2008}.}
Rather than rooted in behavioral theory, most experimental studies are descriptive.
For instance, \citet{chen2006} introduced the district-school bias and the small-school bias, which
capture the tendency that safe district schools are ranked higher and small schools are ranked lower.
We offer a theory to explain this pattern.

% OSP

In contrast to our paper, where students deliberately submit incorrect preferences, 
misrepresentations have most commonly been interpreted as cognitive failures to identify
the dominant strategy.\footnote{See, e.g., \citet{basteck2018}. 
However, when priorities and preferences are induced by the experimenter, 
we see the same individuals play a dominant or dominated strategy depending on their assigned score. 
\citet{hassidim2017ipmm} observe the same pattern in a high-ability population (compared to the general population). Controlling for cognitive ability, \citet{shorrer2017}
and \citet{artemov2020} find a causal relationship between admission selectivity and dominated choices.}
%Indeed, \citet{ding2017} find that participants see incorrect advice as more convincing than advice to play truthfully.
\citet{li2017} points out that DA is not ``obviously strategy-proof",\footnote{We discuss the differences between the two concepts in Section \ref{sec:OSPvsEBLA}.} and shows that most ``mistakes" vanish when replacing DA with sequential serial dictatorship.
We \textcolor{black}{find that} the most common deviations documented by \citet{li2017} \textcolor{black}{are indeed top-rank monotone}. %Moreover, we show that students with EBLA reveal their true preferences in this obviously-strategy-proof mechanism as well.
Hence, our model of non-standard preferences provides an alternative explanation for the observations in \citet{li2017}. We believe that both explanations, cognitive mistakes and non-standard utility, are relevant in practice.\footnote{The persistence of misrepresentations even in high-stake environments, with trained participants and after many rounds of practice suggests that cognitive mistakes are not the only explanation. That some participants respond to advice and training suggests that our bias is not the only explanation. We see it as one piece of the puzzle. Even if the relevance of loss aversion in market-design practice was in doubt, we show to what extend loss aversion can confound experimental evidence and how it can be tested.}

%\cite{bo2018idam,bo2020experiment} design and test a sequential mechanism
%that is not strategy-proof, but performs better in terms of truthfulness and stability in the lab.
%Our results help to understand this observation.

%\citet{ashlagi2018} prove the negative result that, in general, no mechanism is stable and OSP for all
%possible preference relations.

Combinations of behavioral theory and matching are still relatively rare.
To the best of our knowledge, the first paper to consider non-standard preferences in matching
is by \citet{antler2015} whose agents' preferences are directly affected by the reported preferences of others.
\citet{fernandez2018deferred} studies anticipated regret in deferred acceptance, and \citet{zhang2021} considers school choice with level-k reasoning.
\citet{dreyfuss2019} recently and independently raised the point that EBLA can help explain misrepresentations in DA. Alongside various differences in modeling choices,
they focus on the individual decision problem and use empirical strategies to identify loss aversion in existing experimental data. In contrast, we take a deeper theoretical approach by deriving characterization results on rationalizable ROLs, analyzing strategic interaction, and evaluating remedy mechanisms. We discuss the distinction to our paper more carefully in Section \ref{sec:other-paper}. 
\textcolor{black}{\citet{meisner2022} proposes report-dependent utility as an explanation and differentiates our model from his by naming settings in which predictions of the models} differ.

\textcolor{black}{
\citet[p.~22]{gross2015} document that parents
who ``did not get what they hoped for and felt this sense of frustration and false hope" question the legitimacy of centralized school choice. However, they  they also conclude that ``only a small share of families probably experience the disappointment" because most applicants are matched to one of their top-ranked schools. To emphasize that disappointment already affects the submission of preferences, we employ the choice-acclimating personal equilibrium (CPE) concept introduced by \citet{koszegi2006}.}
It essentially captures disappointment aversion similar to \citet{bell1985}, \citet{loomes1986} or \citet{gul1991}, who model the reference point as the lottery's certainty equivalent. \textcolor{black}{\citet[Section~5]{handbook2018reference} compare these models in detail.} We choose CPE where outcomes are compared to the lottery's full distribution because it allows for ``mixed feelings" and because it is unclear what the certainty equivalent of a lottery over real school placements is supposed to be. \textcolor{black}{Comparing both approaches, \citet{sprenger2015endowment} finds more support for \citet{koszegi2006} in the data.}

EBLA is supported by evidence from the field, such as
\hbox{\citet{crawford2011}} % taxi drivers
or \citet{pope2011}. %golf players
Evidence from the lab is mixed. While the conflicting evidence of \citet{ericson2011} and
\citet{heffetz2014} is affirmatively mended by \citet{heffetz2021}, who introduces an extra
treatment causing expectations to ``sink in," the evidence on real-effort experiments with EBLA
\citep{abeler2011,gneezy2017} does not allow a clear verdict, yet.
EBLA has been applied to a variety of economic models.\footnote{Such as moral hazard \citep{herweg2010},
monopoly pricing \citep{herweg2013,heidhues2014,carbajal2016}, pricing with competition
\citep{heidhues2008,karle2014}, consumer search \citep{karle2020}, and auctions \hbox{\citep{lange2010,rosato2019,wangenheim2021}}.} 

\textcolor{black}{The remainder of the paper is structured as follows. In Section \ref{sec:model}, we present the model environment, introduce reference-dependent preferences, and describe the appropriate equilibrium concept. In Section \ref{sec:CPE}, we apply this \textcolor{black}{decision-theoretic} equilibrium concept to the individual decision problem in strategy-proof mechanisms. We introduce the attainability distribution as a reduced form summarizing all information relevant to determine the optimal ROL. Next, we characterize all rationalizable ROLs. In Section \ref{sec:CBNE}, \textcolor{black}{we analyze strategic interaction in the static student-proposing deferred acceptance mechanism, and we show that a game-theoretic equilibrium exists. We then characterize equilibrium in a stylized setting with district and elite schools. In Section \ref{sec:remedy}, we establish that only a sequential mechanism can solve potential instability issues of the static DA mechanism.} All proofs are relegated to the appendix, Section \ref{sec:proofs}. }

%%%%%%%%%%%%%%%%%%%%%%%%%%%%%%%%%%%%%%%%%%%%
%%%%%%%%%%%%%%%%%%%%%%%%%%%%%%%%%%%%%%%%%%%%
%%%%%%%%%%%%%%%%%%%%%%%%%%%%%%%%%%%%%%%%%%%%
%%%%%%%%%%%%%%%%%%%%%%%%%%%%%%%%%%%%%%%%%%%%
%%%%%%%%%%%%%%%%%%%%%%%%%%%%%%%%%%%%%%%%%%%%
%%%%%%%%%% M O D E L
%%%%%%%%%%%%%%%%%%%%%%%%%%%%%%%%%%%%%%%%%%%%
%%%%%%%%%%%%%%%%%%%%%%%%%%%%%%%%%%%%%%%%%%%%
%%%%%%%%%%%%%%%%%%%%%%%%%%%%%%%%%%%%%%%%%%%%
%%%%%%%%%%%%%%%%%%%%%%%%%%%%%%%%%%%%%%%%%%%%
%%%%%%%%%%%%%%%%%%%%%%%%%%%%%%%%%%%%%%%%%%%%

\section{The Model} \label{sec:model}

\textbf{Players:}
We consider finite sets of students, $\mathcal I := \{i_1, \dots,i_n\}$,
and schools, $\mathcal S := \{1, \dots,m\}$.
Each school $s \in \mathcal S$ has a capacity of $q_s \in \mathbb N$ seats for students.
If we want to allow for students to remain unmatched, we can think of school $m$
as a safe outside option with unlimited capacity.

\textbf{Preferences:}
Each student $i \in \mathcal I$ privately draws a type $\theta_i = ( \mathbf v_i , \mathbf w_i, \Lambda_i)$, where each entry of vector $\mathbf v_i =  (v_{i,s})_{s \in \mathcal S}$ represents the payoff student $i$ receives from being matched with corresponding school $s$.\footnote{\textcolor{black}{\sout{In order to evaluate reference-dependent utility, we must rely on cardinal utilities. Yet, our main results will not depend on the cardinal ranking.}}}
Similarly, each element of vector $\mathbf w_i =  (w_{i,s})_{s \in \mathcal S}$ represents the payoff school $s$ receives from being matched with student $i$.
We explain the loss-aversion parameter $\Lambda_i\geq 1$ in its own section later. 
Let for all $i \in \mathcal I$ the type \textcolor{black}{$\theta_i$} be distributed according to a commonly observed distribution over a compact subset of \textcolor{black}{
$\mathbb R^m \times \mathbb R^{m} \times [1,\infty)$}.
\textcolor{black}{Unless explicitly stated otherwise, correlation of types between different students may be arbitrary.}

The ordinal preference over schools corresponding to type $\theta_i$
is captured by a rank-ordered list (ROL).
Formally, an ROL is a permutation of set $\mathcal S$, where ROL $(s_1,s_2, \dots, s_m)$
is interpreted as school $s_1$ being most preferred, $s_m$ least preferred, and $s_k$ having $k$-th highest preference.\footnote{Ties in the ROL may be arbitrarily broken. With continuous type distributions indifferences occur with probability zero and do not affect any result in this paper.}
Let $\mathfrak{S}(\mathcal{S})$ be the set of all such permutations.

\textbf{Mechanism:}
Our results refer to \textcolor{black}{a static mechanism, in which students only report ROLs, and this mechanism is strategy-proof for students with respect to standard preferences $(\mathbf{v_i})_{i \in \mathcal I}$.}
We \textcolor{black}{fix a mechanism, and} assume that schools always report their true \textcolor{black}{priorities} over students.\footnote{This assumption
distinguishes school choice where local laws determine schools' priorities from the college admission
problem where colleges are strategic actors, see, e.g., \citet{chen2006}.}
Formally, a \textcolor{black}{(pure)} reporting strategy for student $i$ is a mapping $\sigma_i: \Theta_i \rightarrow \mathfrak{S}(\mathcal{S})$ from types into ROLs.

\textbf{Properties:}
An allocation $M$ is a many-to-one mapping from $\mathcal I$ to $\mathcal S$
such that $M(i)=s$ denotes that student
$i$ is matched to school $s$  and $M^{-1}(s) = \{i: M(i)=s\}$ lists the students matched to $s$.
Feasibility requires $|M^{-1}(s)|\leq q_s$.
Let $\mathcal M$ be the set of all feasible allocations.
An allocation rule is a function $\alpha: \mathfrak{S}(\mathcal{S})^{n} \rightarrow \mathcal M$,
mapping profiles of ROLs  into matchings.
Let $\nu = (\nu_i)_{i \in \mathcal I}$ be the profile of true ROLs.
An allocation rule is strategy-proof if 
\eq{
v_{i,\alpha(\nu)[i]} \geq v_{i,\alpha(\nu'_i, \nu_{-i})[i]} \quad \forall \: i \in \mathcal I, \forall  \theta.
}
An allocation $M$ is stable\footnote{We use the classic definition of pairwise stability and, \textcolor{black}{in line with much of the applied matching literature,} use this word synonymously with ``no justified envy'' \textcolor{black}{with respect to standard preferences}, but we focus on the latter meaning, i.e., interpreting it as a fairness notion, following \citet{abdulkadirouglu2003} in the spirit of \citet{balinski1999}. To interpret stability as ``no  coalition  can  profitably  deviate" in our setting, we would have to take into account that  student $i$'s approach to school $s$ would create expectations and therefore scope for disappointment. \textcolor{black}{For a careful discussion of these two terms, see \citet{romm2020stability}.}}
if there is no pair $i,s$ such that student $i$ has justified envy,
\eq{
v_{i,s}  >   v_{i,M(i)} \quad \mbox{and} \quad
w_{s,i} >  w_{s,i'} \quad \mbox{for some } i' \in M^{-1}(s),
}
i.e., no student $i$ prefers another school $s$ over her match, while this school
prefers $i$ over at least one of her matched students.
A student-optimal stable matching is a stable matching $M$ such that
\eq{
v_{i,M(i)} \geq v_{i,M'(i)} \quad \mbox{ for any stable matching } M'.
}

\textbf{Loss aversion:}
Each student reports the preferences maximizing her expected utility. Students are expectation-based loss averse in the sense of \cite{koszegi2006,koszegi2007}.  In addition to classical match utility $v_{i,s}$, the student \textcolor{black}{experiences psychological utility in form of perceived} gains and losses when comparing the realized match utility to her reference utility.
For the specification of gain-loss utility, we follow most of the literature by assuming a linear gain-loss function with a kink at zero.
More specifically, let
\eq{ \label{eq:utility}
u ( \theta_i , s | r) = v_{i,s} + \begin{cases}
\eta_i  (v_{i,s} - v_{i,r}) \quad &\mbox{ if } v_{i,s} \geq v_{i,r},\\
 \eta_i \lambda_i (v_{i,s} - v_{i,r}) \quad &\mbox{ if } v_{i,s} < v_{i,r},
\end{cases}
}
denote student $i$'s ex-post utility from being matched with school $s$, when school $r\in \mathcal S$
is her reference match.
The parameter $\lambda_i>1$ captures the degree of loss aversion, whereas $\eta_i\geq 0$
is the weight assigned to the gain-loss
utility. As we will show in \eqref{eq:payoff-distance}, behavior is driven by a summarizing parameter 
$\Lambda_i =\lambda_i \eta_i-\eta_i$ called the loss dominance.
We call students with $\Lambda_i \leq 1$ moderately loss averse and students with $\Lambda_i>1$ dominantly
loss averse.

\textcolor{black}{In Section \ref{sec:CPE}, we consider the individual decision problem of a student $i$ with any fixed $\theta_i$. We define an attainability distribution and delineate how, as a reduced form, this distribution sums up all relevant information about beliefs with respect to priorities and other students' submitted ROLs. Given this attainability distribution,} each report $\sigma_i (\theta_i)$ corresponds to a distribution $F_i = (f_{i,s})_{s \in \mathcal S}$, where $f_{i,s}$ denotes the probability with which $i$ expects to be matched with school $s$. We say a lottery is feasible for student $i$ if there exists an ROL that induces it, 
and let $\mathcal F_i (\theta_i,\sigma_{-i})$ be the set of feasible lotteries.
%We will provide more details on the origin of these distributions in Section \ref{sec:attainability}.

\textcolor{black}{Following \cite{koszegi2006,koszegi2007}, agents hold stochastic reference points, formed by beliefs over outcomes. When comparing an outcome to this reference distribution, they assign gains and losses to each comparison from the reference distribution. For instance, if a student expects a match utility of 1, 2 or 3 with equal probability, a realized match utility of 2 would produce a feeling of gain of one util weighted with probability $1/3$, and a feeling of loss of one util weighted with probability $1/3$. Taking the expectation over these utilities for all possible realizations,} the expected utility from a lottery $F_i$
evaluated with respect to some reference lottery $G=(g_{s})_{s \in \mathcal S}$ is therefore
\begin{equation}\label{EU}
\mathcal U_i  (\theta_i, F_i| G)  =
\sum\limits_{s \in \mathcal S} f_{i,s} \left( \sum\limits_{r \in \mathcal S}
u (\theta_i, s|r)  g_r \right).
\end{equation}

%where the reference point lottery depends on the equilibrium concept.

\textbf{Equilibrium:}
\textcolor{black}{An ROL}
 is a choice-acclimating personal equilibrium (CPE) for student $i$ \textcolor{black}{given her type $\theta_i$ and an attainability distribution} if the corresponding distribution $F_i$ satisfies,
\eq{ \label{eq:interim-IC}
\begin{aligned}
U_i(\theta_i, F_i) :=
\mathcal U_i  (\theta_i, F_i | F_i )
\geq
\mathcal U_i  (\theta_i, F'_i | F'_i) 
\quad \: \forall F'_i \in \mathcal F_i(\theta_i,\sigma_{-i}).
\end{aligned}
}
That is, we assume expectation-based loss aversion (EBLA) according to \citet[Section IV]{koszegi2007}, where
the reference point is stochastic and determined by the actual belief \textcolor{black}{about} the own match outcome.
In a CPE, \textcolor{black}{the ROL} maximizes expected utility
given that the corresponding beliefs determine both the reference lottery and the outcome lottery. 

\textcolor{black}{Our choice of equilibrium concept comes with an assumption on timing. After learning the mechanism's rules, students draw their own types, and form beliefs about others' types and the priorities at schools. By choosing an ROL next, the student implicitly commits to the outcome lottery corresponding to this ROL and the distribution of ROLs she expects from other students. Consequently, this lottery is the reference point according to which gains and losses are defined and weighted, and submitting another ROL would amount to having a different reference point.}

\textcolor{black}{\cite{koszegi2006,koszegi2007} also suggest an alternative equilibrium concept called unacclimating personal equilibrium (UPE), where the reference distribution does not adjust when contemplating deviations. Both concepts are frequently used in the literature. We opt for the choice-acclimated version for various reasons. First, it captures most closely the notion of disappointment, as discussed in the introduction and below \eqref{eq:payoff-distance}. Second, only CPE can explain stochastically dominated choices such as non-truthful play in a truthful mechanism. Third, CPE reflects that by submitting an ROL, a student \textit{commits} to a lottery over outcomes with respect to which disappointment is evaluated, whereas UPE is appropriate in situations where choices are anticipated but not committed to in advance.}

\textcolor{black}{In Section \ref{sec:CBNE}, we consider strategic interaction and define the appropriate game-theoretic equilibrium concept, 
choice-acclimating Bayesian Nash equilibrium (CBNE). That is, while CPE is a decision-theoretic concept requiring that a student best-responds to given beliefs, CBNE additionally requires that each student's beliefs are consistent with all the other students' strategies.}

%%%%%%%%%%%%%%%%%%%%%%%%%%%%%%%%%%%%%%%%%%%%%%%%%
%%%%%%%%%%%%%%%%%%% A N A L Y S I S %%%%%%%%%%%%%%%%%%%%%%%%%
%%%%%%%%%%%%%%%%%%%%%%%%%%%%%%%%%%%%%%%%%%%%%%%%%%%%
%%%%%%%%%%%%%%%%%%%%%%%%%%%%%%%%%%%%%%%%%%%%%%%%%%%%
%%%%%%%%%%%%%%%%%%%%%%%%%%%%%%%%%%%%%%%%%%%%%%%%%%%%
%%%%%%%%%%%%%%%%%%%%%%%%%%%%%%%%%%%%%%%%%%%%%%%%%%%%

\section{The Individual Decision Problem} \label{sec:CPE}

As we consider the individual problem of some representative student $i$ by fixing her type $\theta_i$ and the
other students' strategy profile $\sigma_{-i}$, it is convenient to drop the student's indices $i$ and also,
without loss of generality, relabel schools such that $v_1 > v_2 > \dots > v_m$.

\subsubsection*{Match probabilities and attainability} \label{sec:attainability}

\textcolor{black}{
For a fixed mechanism and given the other students' ROLs, schools' priorities and capacities, we call school $s$ attainable for student $i$ if there exists some ROL such that the given mechanism matches $i$ and $s$. In a strategy-proof mechanism, this is the case if and only if $i$ is assigned to $s$ when ranking it first. Let $A_{s} \in \{1,0\}$ be a binary variable determining whether school $s$ is attainable for the representative student. The attainability distribution $P$ is a probability distribution over attainability states $(A_s)_{s \in \mathcal S}$.}

\textcolor{black}{Example \ref{ex} illustrates the concept of the attainability distribution for the static student-proposing DA mechanism.} By construction of DA, student $i$ is rejected by school $s$ if at some step of the algorithm more than $q_s$ students with a higher priority than $i$ apply to school $s$. Hence, student $i$ is matched to the $k$-th ranked school of her ROL if the capacities of all schools she ranked before are filled by students that these schools individually prefer over student $i$.
\textcolor{black}{By strategy-proofness of DA, there exists no ROL such that student $i$ is assigned to school $s$ for the given priorities and ROLs submitted by other students. Hence, the rejecting school $s$ is not attainable.}

\vspace{-8pt}\begin{lem} \label{lem:acc}
\textcolor{black}{A static strategy-proof mechanism} assigns a student to her highest-ranked attainable school.
\end{lem}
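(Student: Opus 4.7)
The plan is to pin down the school $\hat s$ assigned to student $i$ given her submitted ROL $P = (s_{(1)},\dots,s_{(m)})$ and show it must be the top-ranked attainable school in $P$. The easy half is immediate: $\hat s$ is itself attainable because the submission $P$ is a witnessing ROL under which $i$ ends up matched to $\hat s$, so by the definition given just above the lemma, $\hat s$ belongs to the set of attainable schools.

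The substantive step is to rule out that any school ranked strictly above $\hat s$ in $P$ could be attainable. I would argue by contradiction: suppose $s = s_{(j)}$ with $j < k$ and $\hat s = s_{(k)}$ were attainable. By the equivalent characterization stated in the text — in a strategy-proof mechanism, $s$ is attainable if and only if $i$ is matched to $s$ when ranking $s$ first — there exists an ROL $P^\star$ placing $s$ at the top under which $i$ is assigned to $s$. Next I would hypothesize a cardinal utility vector $\mathbf v^\star$ consistent with $P$, i.e., one satisfying $v^\star_{s_{(1)}} > v^\star_{s_{(2)}} > \cdots > v^\star_{s_{(m)}}$; this is always possible because $P$ is a strict linear order. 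Under $\mathbf v^\star$, the truthful report is exactly $P$, so the strategy-proofness condition displayed in the setup forces $v^\star_{\hat s} \geq v^\star_s$. This contradicts $v^\star_s > v^\star_{\hat s}$, which follows from $j < k$ together with the construction of $\mathbf v^\star$.

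The only delicate point — and really the only moving part — is that the mechanism acts only on the ordinal report $P$, whereas strategy-proofness is formulated as a condition on cardinal match utilities. The resolution is to read that condition as a universal quantifier over cardinal profiles: for \emph{every} $\mathbf v^\star$ consistent with $P$, submitting $P$ must weakly dominate all deviations; this is enough to pin down the mechanism's output on $P$ to the highest-ranked attainable school and completes the argument.
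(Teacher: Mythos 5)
Your proof is correct and follows essentially the same route as the paper's: the substantive half is exactly the paper's second case, instantiating strategy-proofness at a cardinal utility vector consistent with the submitted ROL to rule out that a higher-ranked attainable school exists. The only cosmetic difference is that you dispose of the paper's first case (being matched to a school ranked above the top attainable one) directly from the existential definition of attainability, whereas the paper runs a second strategy-proofness argument there; both are fine.
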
 \vspace{-8pt}

Whether a school is attainable for student $i$ depends on the strategies of other students and on the schools' \textcolor{black}{capacities and priorities } over students, but not on the ROL submitted by the student herself. The submitted ROL does, however, determine which of the attainable schools is ranked first, and hence constitutes the student's match. Therefore, the submitted ROL determines the match outcome distribution $F$, and selecting an ROL effectively corresponds to choosing a lottery over match outcomes. {\color{black}
From the perspective of student $i$, all the relevant information about the distribution of other students' ROLs and priorities needed to evaluate the expected payoffs \eqref{eq:interim-IC} is collected in attainability distribution $P$. Indeed, by Lemma \ref{lem:acc} match probabilities induced by an ROL $(s_1,s_2,\dots,s_m)$ are given by
\eq{\label{eq:probs}
f_{i,s_k}= \mbox{Pr} (A_{s_\ell}=0 \: \forall \ell<k, A_{s_k}=1).
} 
Hence, an individual's decision problem can be summarized by a set of schools $\mathcal S$, a preference vector $\mathbf{v_i}$, an individual loss parameter $\Lambda_i$, and an attainability distribution $P$.}

\textcolor{black}{In the following, w}e denote for each school $s$ with 
$p_{s}=P(A_{s}=1)$ the unconditional probability of being attainable at that school. Importantly, the attainability probabilities are usually not independent, even when types are independent draws.

\subsubsection*{Outside options and truncated lists} \label{sec:outside}

In many existing implementations of \textcolor{black}{strategy-proof mechanisms}, it is allowed to submit incomplete ROLs.
%and sometimes participants are even restricted to such truncations.
We can include the possibility to drop a school from the ROL by defining 
the outside option as a fictional school $m$ with
unlimited capacity and normalized $v_m=0$. Depending on the context the outside option may refer to remaining unmatched or being matched to some ``default'' option.
Different ROLs which rank the same schools in arbitrary order behind the outside option are equivalent in the sense that 
they induce the same match probabilities. 
Because the outside option is always attainable, ranking a school after it 
corresponds to dropping this school such that a match with this school is excluded.
While it is never optimal to list schools with $v_s<0$, 
we will show that it can be optimal to drop schools with $v_s>0$.
%
%Depending on the environment, remaining unmatched is not always the outside option.
%For this section, we define the most preferred school that is attainable with probability one
%as the (de facto) outside option. 
%Clearly, there can be an actual school with a large enough capacity that it never rejects any student.
%Moreover, $\sigma_{-i}$ could be such that less than $q_s$ students apply to school $s$
%for any type realization such that this school is attainable with probability one.
%Alternatively, many school choice programs prohibit ``district schools" to reject
%students from their district. Our results on ROL truncation also apply to
%the ``district-school bias'' prevalent in the data.

\subsubsection*{Payoffs}

For any ROL resulting in lottery $F=(f_1,...,f_m)$, we can rewrite  the expected utility %in \eqref{EU} 
as
\begin{align}
U ( \theta ,  F) & = \sum\limits_{s \in \mathcal S} f_s \left( \sum\limits_{r \in \mathcal S}
u (\theta, s|r)  f_r \right) \notag \\
 &= \sum_{s=1}^{m} f_s 
\left[
 \left( \sum_{r=1}^m f_r\right) v_s +  \sum_{r=1}^{s-1} f_r \lambda \eta (v_s-v_r)+
  \sum_{r=s+1}^m f_r \eta (v_s-v_r)  \right] \notag \\
%   & = \sum_{s=1}^{m} f_s  v_s + \sum_{1\leq s\leq r \leq m} f_s f_r \eta (v_s-v_r) + \sum_{1\leq r\leq s \leq m} f_s f_r \lambda_i \eta (v_s-v_r) \notag \\
   & = \underbrace{\sum_{s=1}^{m} f_s  v_s}_{\text{classical utility}} -\underbrace{\Lambda \sum_{s=1}^m\sum_{r =s+1}^m f_s f_r (v_s-v_r)}_{\text{gain-loss utility}}. \label{eq:payoff-distance}
\end{align}
Each pairwise comparison is weighted by $f_s f_r$ and shows up twice: once as a gain and once as a loss, 
its total factor is $\Lambda>0$.
Since losses are weighted stronger than gains, expected gain-loss
utility always enters negatively. 
Under our notational convention, the difference $(v_s-v_r)$ is positive for each $r>s$.
One can think of the expected gain-loss term as the
cost of uncertainty. It is proportional to the loss dominance $\Lambda$ and the
average distance between two realizations.
An equal weight on gains and losses, $\lambda = 1$, would result in $\Lambda =0$ such that
students only maximize classical utility.
If $\Lambda >1$, gain-loss utility may dominate match utility, which will become central.

\subsubsection*{Example} \label{sec:examples}

The following example illustrates the tradeoff between the gains from classical utility and the losses from expected reference-dependent utility, which provides the incentives to misrepresent true preferences. It foreshadows our characterization results
on which ROLs can be rationalized under EBLA and provides intuition for comparative statics in a student's loss dominance parameter and her priority.
Intuitively, increasing $\Lambda$ augments the relative weight of gain-loss utility over match utility.
Hence, reducing the exposure to sensations of loss by taming expectations becomes a central motif.

\vspace{-8pt}
\begin{example} \label{ex}
There are three students, \textcolor{black}{$\mathcal I = \{X,Y,Z\}$}, and two schools with a single seat each, such
that one student will remain unmatched. By treating the outside option as a third school with unconstrained capacity, we obtain $\mathcal S = \{1,2,3\}$
with capacities $q_1=q_2=1,q_3=3$.
Suppose that all students prefer a school seat over being unmatched and that school 1
is expected to be the more popular school,
\ea{
\mbox{Pr} ( v_{i,1} > v_{i,2} > v_{i,3} ) = (1- \varepsilon) \quad > \quad
\mbox{Pr} ( v_{i,2} > v_{i,1} > v_{i,3} ) = \varepsilon \quad \forall i \in \mathcal I.
}
\textcolor{black}{Priorities} are determined by scores $\omega_i$ which each student independently
draws from a uniform distribution on $[0,1]$.
We take the perspective of student \textcolor{black}{$X$} with preferences $v_1 > v_2> v_3$ and score $\omega$.
Suppose she believes the other two students are truthful.
Table \ref{tab:attainable} provides the attainability \textcolor{black}{distribution if} 
$\omega=\nicefrac{1}{4}$ and $\varepsilon=\nicefrac{1}{20}$. \textcolor{black}{The fictional school 3 is always attainable, $A_3=1$.}

\begin{table}[h]
\begin{center}
 \begin{tabular}{||c c c||}
 \hline
 Attainability & ~~~~~\textcolor{black}{$A_1=1$} & \textcolor{black}{$A_1=0$} \\ [1ex]
 \hline\hline
\textcolor{black}{$A_2=1$} & ~~~~~~~~~~$\omega^2=\nicefrac{10}{160}$ & ~~$2\omega(1-\omega)(1-\varepsilon)=\nicefrac{57}{160}$~~~ \\
 \hline
 \textcolor{black}{$A_2=0$} & $2\omega(1-\omega)\varepsilon=\nicefrac{3}{160}$ & ~~~~~~~~~~~~$(1-\omega)^2=\nicefrac{90}{160}$~~~ \\
 \hline
\end{tabular}
\caption{Attainability probabilities for $\omega=\nicefrac{1}{4}$ and $\epsilon=\nicefrac{1}{20}$
and $\omega=\nicefrac{1}{4}$.} \label{tab:attainable}
\end{center}
\end{table}

 Evidently, both schools are attainable for \textcolor{black}{$X$} only if she has the highest score, and neither school is if she has the lowest score. Only one of the schools is attainable if she has the second highest score and the student the with highest score prefers the other school. Note that the attainability probabilities are interdependent, even though preferences and scores are drawn independently.

From the attainability \textcolor{black}{distribution}, the student can infer the lottery over match outcomes for any possible ROL. For instance, the true ROL, $(1,2,3)$, leads to a  match with school 1 if and only if it is attainable, with school 2 if and only if is attainable but school 1 is not, and to no match if and only if both schools are unattainable. Table \ref{tab:ex} presents match probabilities for all ROLs.

\begin{table}[h]
\begin{center}
 \begin{tabular}{||c c c c||}
 \hline
 ~~ROL~~ & ~~$f_1$~~ & ~~$f_2$~~ & ~~$f_3$~~\\ [1ex]
 \hline\hline
 1,2,3 & $\nicefrac{13}{160}$ & $\nicefrac{57}{160}$ & $\nicefrac{90}{160}$\\
 \hline
 2,1,3 & $\nicefrac{3}{160}$ & $\nicefrac{67}{160}$ & $\nicefrac{90}{160}$\\
 \hline
 2,3,1 & $0$ & $\nicefrac{67}{160}$ & $\nicefrac{93}{160}$\\
 \hline
 1,3,2 & $\nicefrac{13}{160}$ & $0$ & $\nicefrac{147}{160}$\\
 \hline
 3,2,1 & $0$ & $0$ & $1$\\
  \hline
 3,1,2 & $0$ & $0$ & $1$\\
 \hline
\end{tabular}
\caption{All possible ROLs of the example and the corresponding lotteries for $\epsilon=\nicefrac{1}{20}$
and $\omega=\nicefrac{1}{4}$.} \label{tab:ex}
\end{center}
\end{table}

We see that flipping 1 and 2 in the ranking
shifts a probability mass of $\nicefrac{10}{160}$ (the probability of both schools being attainable) from school 1 to 2, which decreases not only classical utility but also the cost of uncertainty.
Similarly, dropping the last ranked school simply shifts match probability mass from this school to the outside option. 
%We will exploit this structure for several of our results.
ROLs listing the outside option first induce identical degenerate lotteries.% and are therefore regarded as equivalent.

\begin{figure}[h!]
\begin{center}
\begin{adjustbox}{center,scale=0.75}
\subfloat[Expected utility as a function of $\Lambda$ with $\omega=\nicefrac{1}{4},$ ]{ \label{subfig:lambda}
\includegraphics{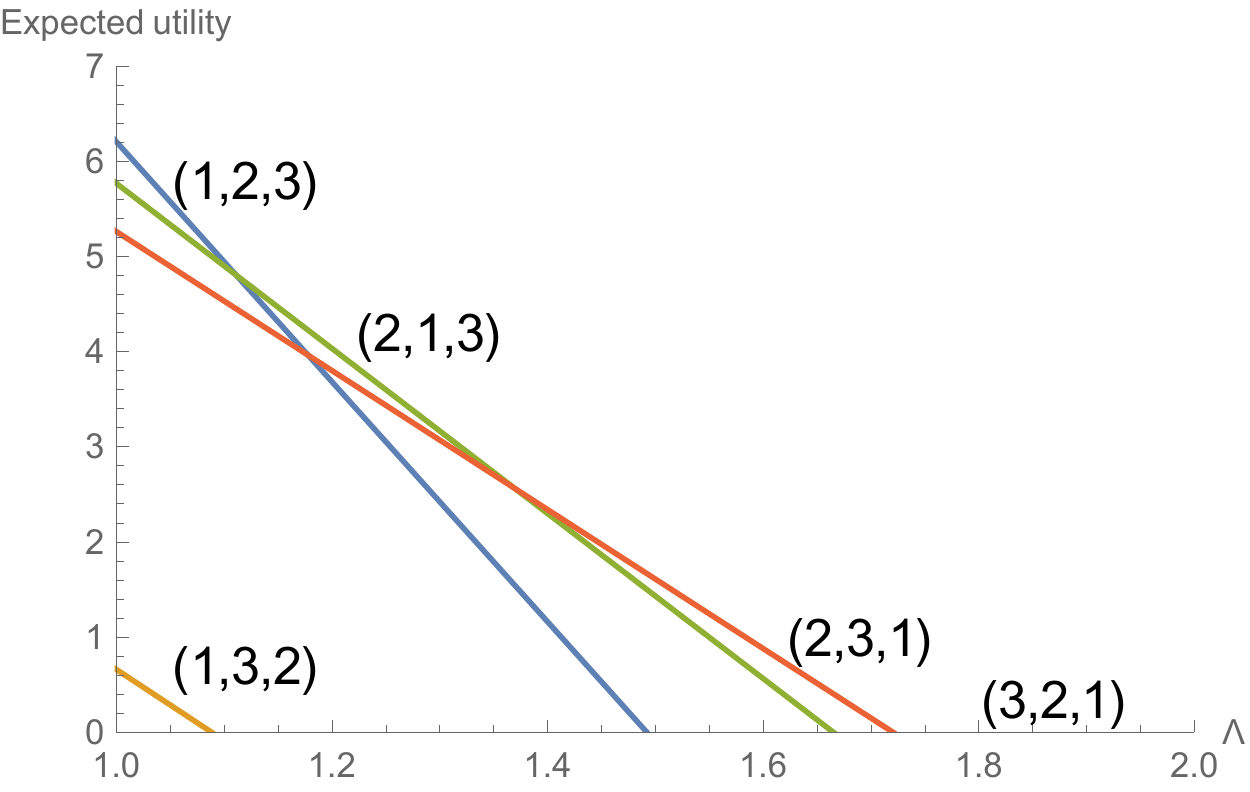}
}
\subfloat[Expected utility as a function of $\omega$ with $\Lambda=\nicefrac{3}{2}$.]{ \label{subfig:delta}
\includegraphics{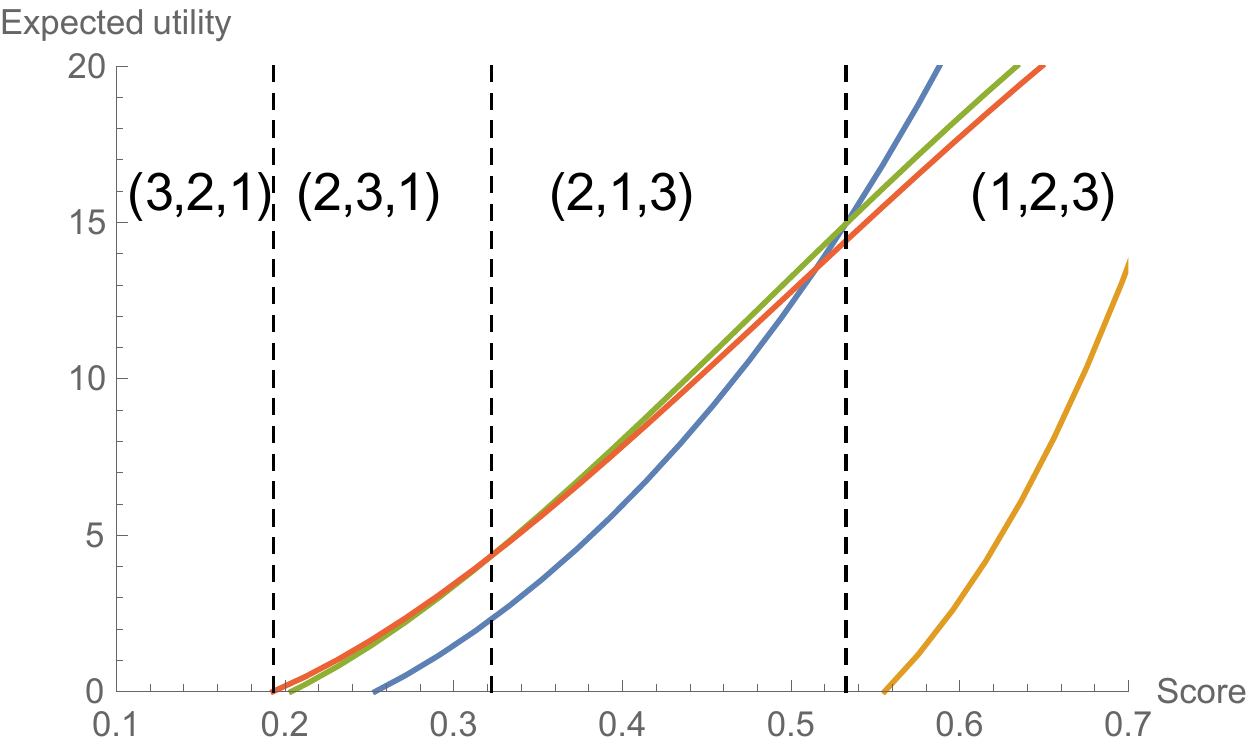}
}
\end{adjustbox}
  \caption{The expected utilities induced by every ROL as a function of (a) $\Lambda$
  and (b) $\omega$, setting $v_1=100, v_2=30, v_3=0$ and $\varepsilon=\nicefrac{1}{20}$.}\label{fig:ex}
\end{center}
\end{figure}

Given the lotteries, we can calculate the expected utilities for any $\Lambda$ and select the optimal ROL.
Figure \ref{fig:ex} illustrates the expected utilities induced by different ROLs.
Figure \ref{subfig:lambda} demonstrates that for a sufficiently small $\Lambda$ the student always reports truthfully, as the lottery corresponding to the true ROL first-order stochastically dominates every other lottery and the positive effects on match utility dominate the cost of uncertainty. As we increase $\Lambda$, 
preferred schools are optimally ranked as worse, ultimately culminating in submitting an empty ROL when the perceived cost of uncertainty is sufficiently high.\footnote{Abstaining from the mechanism by choosing a dominated outside option is reminiscent of the ``uncertainty effect" documented by \citet{gneezy2006}.} Notably, any optimal manipulation involves a flipping (or dropping) of the most preferred option -- ROL $(1,3,2)$ is never optimal. %Intuitively, shifting probability mass from the extreme towards the expectation will most strongly diminish uncertainty, and is therefore most likely to exceed the losses in match utility.  %This insight will be generalized in our characterization result in in Proposition \ref{prop:order}.
 Figure \ref{subfig:delta} illustrates that students tend to become more truthful as
their scores increase and they become more optimistic. 
%Hence, in particular the disadvantaged students are prone to untruthful reporting.

A large $\Lambda$ by itself does not lead to profitable deviations from the true ROL, as they are inherently linked to incomplete information. 
If students had full information about other students' reports and \textcolor{black}{priorities},
they could infer their match outcome from the mechanism, and would have no cost of being truthful, such that DA would implement the student-optimal stable matching.
Moreover, the optimal ROLs of this example cannot be explained by simple risk aversion,
because the truthful lottery first-order stochastically dominates every other feasible lottery.

\end{example}

\subsubsection*{Characterization of optimal ROLs} \label{sec:optimalROL}

As we have seen in Example \ref{ex},
the dominance of the truthful strategy does not necessarily carry over to a truthful CPE if loss aversion is sufficiently strong. As we will show in Proposition \ref{prop:bounds}, for any $\Lambda>1$, a sufficiently pessimistic student will misrepresent her preferences. 
Conversely, \citet[Proposition 1]{masatlioglu2016} show that CPE respects first-order stochastic dominance if 
$\Lambda \leq 1$, and, by the dominance of the truthful strategy with standard preferences, 
the truthful lottery first-order stochastically dominates any other feasible lottery.
Hence, the truthful strategy is a CPE for any profile and all possible beliefs if and only if 
$\Lambda \leq 1$.

There is substantial evidence that a relevant fraction of the population is indeed dominantly loss averse.\footnote{While many applied papers restrict attention to $\Lambda \leq 1$, we explicitly allow (all or only some) students to be dominantly loss averse, in order to explain deviations from truth-telling. \citet{herweg2010} introduced the assumption ``no dominance of gain-loss utility'' as $\lambda \leq 2$ with fixed
	$\eta=1$, and it was later picked up in various forms. Rather than being based on evidence, the main reason why it is imposed seems to be that it makes problems well-behaved.}
$\Lambda>1$ matches the conventional wisdom that ``losses loom about twice as large as gains." While this rule of thumb
originates from studies on riskless choices, it also seems to apply when risk is involved.\footnote{See \citet{tversky1992}, \citet{gill2012}, \citet{sprenger2015} or \citet{karle2015}.}
In a meta-analysis of over 150 articles, \citet{brown2021} find that the mean loss-aversion coefficient $\lambda$ with $\eta=1$ is between 1.8 and 2.1 and about 38\% out of more than 600 estimates find $\lambda>2$, corresponding to $\Lambda>1$.
While the possible preference for first-order stochastically dominated lotteries that comes with
this assumption may appear counterintuitive, it is not only observable in the matching context.\footnote{See
the discussion around Proposition 7 by \citet{koszegi2007}.
While the ``uncertainty effect" found by \citet{gneezy2006} provides evidence in this direction,
\hbox{\citet{rydval2009}} suggest it cannot be replicated.
In the context of choice bracketing, \hbox{\citet{tversky1981}} and \citet{rabin2009}
provide experimental evidence that people can have a preference for dominated lotteries.}

When searching for utility maximizing ROLs the following property turns out to be both, a necessary and sufficient condition for optimality \textcolor{black}{for some cardinal utility vector}.

\begin{definition}
\textcolor{black}{A ROL $(s_1,s_2,\dots,s_m)$ is top-rank monotone if it has the following properties:
\begin{itemize}
\item For any school preferred over the top-ranked school, $s_k<s_1$, it must be that all schools preferred over $s_k$ are ranked behind $s_k$, $s_i<s_k \Rightarrow i>k$.
\item For any school not preferred over the top-ranked school, $s_\ell>s_1$, it must be that all schools preferred over $s_\ell$ are ranked before $s_\ell$, $s_j<s_\ell \Rightarrow j<\ell$.
\end{itemize}}
\end{definition}

\textcolor{black}{That is, top-rank monotone ROLs reverse the order of schools preferred over the top-ranked school, because the rank of these schools is decreasing in their rank in the true ROL. Conversely, they preserve the order of schools not preferred over the top-ranked school, because the rank of these schools is increasing in their true rank.} For example, the ROL $(1,3,2,4)$ is not top-\textcolor{black}{rank} monotone, because 2 is ranked behind 3 although $v_2>v_3$ such that the preference order of schools considered worse than \textcolor{black}{the top-ranked school} 1 is not reflected in the ranking. Similarly, ROL $(3,1,2,4)$ violates the property, while ROL $(3,2,1,4)$ satisfies it as the preference ranking of schools preferred over \textcolor{black}{the top-ranked school} 3 is reversed. Table \ref{tab:ROL} exhibits further examples.
%if it ranks all schools preferred to the one ranked first in decreasing order of their preference and all other schools in increasing order.

\textcolor{black}{The following condition implies that all ROLs correspond to different lotteries such that swapping two adjacently ranked schools' positions in the ranking alters their match probabilities \eqref{eq:probs} by shifting a probability mass $\varepsilon>0$.}
\textcolor{black}{\begin{definition} \label{def:full-sup}
An attainability distribution $P$ has full support if for all $(a_1,...,a_m)\in\{0,1\}^m$ the distribution satisfies $\text{Pr}(\forall i\in\{1,...,m\}:A_i=a_i)>0$.
\end{definition}}

\begin{prop} \label{prop:order}
Take any $\mathcal S=\{1,\dots,m\}$ with the implied ordinal ranking and any $\Lambda>1$.
\begin{itemize}
    \item[a)] For any \textcolor{black}{attainability distribution $P$, any strictly optimal ROL must be top-rank monotone. For any attainability distribution $P$ with full support, any optimal ROL must be top-rank monotone.}
    
    \item[b)] For any ROL $L$ which is top-\textcolor{black}{rank} monotone with respect to the ordinal ranking,  there is 
    \textcolor{black}{an attainability distribution $P$ and a cardinal utility vector with $v_1>v_2>...>v_m$} such that $L$ is strictly optimal.
    \end{itemize}
\end{prop}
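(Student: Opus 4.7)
The plan is a two-part argument: a local-swap argument for part (a), and an explicit construction for part (b).

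For part (a), I would prove the contrapositive: if an ROL $L$ is not top-rank monotone, then $L$ is not strictly optimal and, under full support, not optimal. The key observation is that violations of top-rank monotonicity produce at least one adjacent inverted pair at some positions $(k,k+1)$ of one of three types: (A) two schools both preferred over $s_1$ listed in preference rather than reverse-preference order; (B) two schools both not preferred over $s_1$ listed in reverse-preference rather than preference order; (C) a less-preferred school immediately followed by a more-preferred one. Swapping positions $(k,k+1)$ shifts probability mass $\varepsilon = \Pr(A_{s_1}=\cdots=A_{s_{k-1}}=0,\, A_{s_k}=A_{s_{k+1}}=1)$ between the two schools, and $\varepsilon > 0$ under full support.

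I would then compute $\Delta U$ exactly from \eqref{eq:payoff-distance}. For cases (B) and (C), the swap moves mass toward a more-preferred school, so the classical term $\sum_s f_s v_s$ strictly rises and a bounded change in the gain-loss penalty yields $\Delta U > 0$. For case (A), classical utility falls, and the improvement must come from a strict reduction in the spread term $\sum_{s<r} f_s f_r (v_s-v_r)$. The key algebraic observation is that both schools involved lie above the dominant mass at $s_1$, so shifting mass between them brings the lottery closer to that anchor; for $\Lambda > 1$ this reduction in spread strictly outweighs the classical loss when the "natural" swap at the inversion applies, and otherwise one shows that one of the $m-1$ adjacent swaps strictly improves $U$, contradicting (strict) optimality.

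For part (b), given a top-rank monotone ROL $L$ with top school $t$, I would construct $P$ and $v$ explicitly. A natural choice takes independent attainability with $p_t$ moderate and $p_s$ for $s\neq t$ small enough that the reverse-preference ordering of upper schools and the preference ordering of lower schools are strictly best, together with utilities $v_1>\cdots>v_m$ whose gaps are calibrated so that every deviation from $L$'s top-rank monotone order strictly decreases $U$ — essentially running the computation from part (a) in reverse. Strict optimality of $L$ against any alternative $L' \neq L$ then follows because $L'$ differs from $L$ by a sequence of adjacent swaps, each of which strictly worsens $U$ under the chosen parameters. The main obstacle is the algebraic analysis of case (A) in part (a), where classical and gain-loss utility move in opposite directions and the net sign may require considering not just the swap at the inversion but any of the $m-1$ adjacent swaps; a secondary subtlety in (b) is that the construction must also rule out strategies that drop schools by placing them after the outside option, so $(P,v)$ must make each listed inclusion strictly worthwhile.
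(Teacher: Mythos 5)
There is a genuine gap in part (a). Your reduction of a top-rank-monotonicity violation to a single improving \emph{adjacent} swap does not work. First, the classification is not diagnostic: your type (C) pairs (a less-preferred school immediately followed by a more-preferred one) occur inside perfectly valid top-rank monotone ROLs such as $(3,2,1,4)$, and a genuine violation --- e.g.\ schools $1$ and $2$ appearing in that order in $(3,4,1,5,2,6)$ --- need not involve any adjacent pair of the offending schools. Second, and more importantly, your claim that in cases (B) and (C) ``the classical term strictly rises and a bounded change in the gain-loss penalty yields $\Delta U>0$'' is false for $\Lambda>1$: the whole point of the model is that the increase in the spread term $\Lambda\sum_{s<r}f_sf_r(v_s-v_r)$ can exceed the classical gain, which is exactly why non-truthful ROLs can be optimal. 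Concretely, $(1,3,2,4)$ can strictly beat $(1,2,3,4)$ when $f_1$ is small, so the single swap at the (B)-inversion is not always improving. The paper's argument is different in kind: it derives from the adjacent-swap formula (Lemma~\ref{flip}) \emph{two} necessary optimality conditions --- one for an ``increasing'' adjacent pair $(\underline x,\underline y)$ and one for a ``decreasing'' adjacent pair $(\overline y,\overline x)$ that must both exist whenever top-rank monotonicity fails --- and shows term by term that the two resulting bounds on $1/\Lambda$ are incompatible. Neither swap alone need be improving; the contradiction comes from the pair of conditions. Your fallback sentence (``otherwise one of the $m-1$ adjacent swaps strictly improves $U$'') states the needed conclusion but supplies no mechanism for it.

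Part (b) also has two problems. Independence of attainabilities is a substantive restriction: the paper's construction deliberately builds \emph{correlated} attainabilities by induction on $m$, splitting a school $\tilde 2$ into schools $1$ and $2$ so that school $1$ is attainable only on a small-probability slice of the event that school $2$ is attainable; this is what lets it place school $1$ at an arbitrary prescribed depth inside the increasing block while keeping the rest of the list optimal (via continuity in $v_1$ and an intermediate-value argument). You give no reason why an independent product measure can realize every one of the $2^{m-1}$ interleavings. Moreover, your closing step --- strict optimality of $L$ against any $L'$ ``because $L'$ differs from $L$ by a sequence of adjacent swaps, each of which strictly worsens $U$'' --- conflates local with global optimality: the profitability condition \eqref{eq:aux} for each swap depends on the lottery of the intermediate ROL at which the swap is performed, so showing that every adjacent swap away from $L$ is harmful does not control the utility of ROLs two or more transpositions away. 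The paper avoids this by comparing $L$ directly against whole classes of alternatives in its three claims.
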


%If all ROLs correspond to different lotteries over match outcomes,
%Proposition \ref{prop:order} holds for any optimal ROL.
%However, i
If some ROLs correspond to identical lotteries (and therefore identical expected utility), it is possible
that a student is indifferent between multiple ROLs out of which at least one is top-\textcolor{black}{rank} monotone.\footnote{For this reason, we render ROLs equivalent for which only the ranking after the outside option differs. In Table \ref{tab:ROL} the darkly shaded ROLs are in this sense redundant as they represent the same lottery as a unique top-\textcolor{black}{rank} monotone analog.
Identical lotteries can also arise if a subset of schools together constitute an outside option, making any permutation of schools
ranked after them meaningless. Similarly, the ranking of two schools that are not attainable does not matter.
There are no equivalent ROLs if
for any subset of schools the probability of all of them being attainable is strictly between zero and one.}
Only a comparably small set of ROLs is top-\textcolor{black}{rank} monotone.\footnote{
Indeed, while for $m$ schools the number of
ROLs is $m!$ (or $\sum_{i=1}^{m} (m-i)! { m-1 \choose i-1 } = \sum_{i=1}^{m} \nicefrac{(m-1)!}{(i-1)!}$
non-redundant ROLs when $m$ is an outside option),
just $2^{m-1}$ are top-\textcolor{black}{rank} monotone.}

The formal proof of the proposition is the appendix, but
its general idea is easily understood by example.
Table \ref{tab:ROL} shows all possible ROLs for a setting with $m=4$ as an outside option.
%three schools and the option to remain unmatched (``school 4'').
The bold numbers are the listed schools,  as schools ranked after 4 can be interpreted as ``dropped from the ranking.'' 
%The darkly shaded ROLs are redundant in the sense that they are either equivalent to another ROL listing only one school or another ROL dropping all schools. 
The shaded ROLs are 
never strictly optimal as they violate top-\textcolor{black}{rank} monotonicity. 
For instance, $(1,3,2,4)$ reverses 2 and 3 which are not preferred over \textcolor{black}{the top-ranked school} 1. 
Intuitively, if the student were willing to reduce risk by shifting probability mass from school 2 to 3, i.e., $(1,3,2,4) \succ_i (1,2,3,4)$, then she would be a forteriori willing to shift probability mass from school 1 to 3, i.e., 
$(3,1,2,4) \succ_i (1,3,2,4)$. Hence, $(1,3,2,4)$ can never be strictly optimal.
Similarly, $(3,1,2,4)$ cannot be strictly optimal as either
$(1,3,2,4) \succ_i (3,1,2,4)$ or $(3,2,1,4) \succ_i (3,1,2,4)$.

\begin{table}[h]
\begin{center}
\begin{tabular}{|c|c|cl}
\hline
\textbf{Full ROL} & \textbf{Drop one} & \multicolumn{1}{c|}{\textbf{Drop two}} & \multicolumn{1}{c|}{\textbf{Empty ROL}}    \\ \hline
\textbf{\textbf{1,2,3}},4    & \cellcolor{lightgray}\textbf{\textbf{1,2}},4,3      & \multicolumn{1}{c|}{\cellcolor{lightgray}\textbf{\textbf{1}},4,3,2}        & \multicolumn{1}{c|}{4,3,2,1} \\ \hline
\textbf{\textbf{2,1,3}},4    & \cellcolor{lightgray}\textbf{\textbf{2,1}},4,3      & \multicolumn{1}{c|}{ \cellcolor{lightgray}\textbf{\textbf{2}},4,3,1}        &                                           \multicolumn{1}{c|}{\cellcolor{gray}4,1,2,3}   \\ \hline
\cellcolor{lightgray}\textbf{\textbf{3,1,2}},4    & \cellcolor{lightgray}\textbf{\textbf{3,1}},4,2      & \multicolumn{1}{c|}{\textbf{\textbf{3}},4,2,1}        &                                           \multicolumn{1}{c|}{\cellcolor{gray}4,2,1,3}  \\ \hline
\cellcolor{lightgray}\textbf{\textbf{1,3,2}},4    & \cellcolor{lightgray}\textbf{\textbf{1,3}},4,2      & \multicolumn{1}{c|}{\cellcolor{gray}1,4,2,3}                   &                                           \multicolumn{1}{c|}{\cellcolor{gray}4,3,1,2}   \\ \hline
\textbf{\textbf{2,3,1}},4    & \textbf{\textbf{2,3}},4,1      & \multicolumn{1}{c|}{\cellcolor{gray}2,4,1,3}                   &                                           \multicolumn{1}{c|}{\cellcolor{gray}4,1,3,2}  \\ \hline
\textbf{\textbf{3,2,1}},4    & \textbf{\textbf{3,2}},4,1      & \multicolumn{1}{c|}{\cellcolor{gray}3,4,1,2}                   &                                           \multicolumn{1}{c|}{\cellcolor{gray}4,3,2,1}   \\ \hline
\end{tabular}
\caption{All possible permutations with three schools and an outside option.
The darkly shaded ROLs are redundant.
The lightly (and darkly) shaded ROLs are not top-\textcolor{black}{rank} monotone and thus never strictly optimal.} \label{tab:ROL}
\end{center}
\end{table}

As a first impression of our theory's predictive power, we briefly consider an experiment by
\citet[treatment SP-RSD]{li2017}.
Here, each participant is privately endowed with a priority score (an integer between 1 and 10)
and is informed about how all participants commonly value each of four prizes between \$0 and \$1.25.
Then, participants simultaneously submit an ROL about the prizes to a mechanism 
which calculates the DA allocation.

\begin{table}[h!]
\begin{adjustbox}{center,scale=1}
\begin{tabular}{|l|lllllllllll|}
\hline
Priority & \multicolumn{1}{c}{1}                & \multicolumn{1}{c}{2} & \multicolumn{1}{c}{3} & \multicolumn{1}{c}{4} & \multicolumn{1}{c}{5} & \multicolumn{1}{c}{6} & \multicolumn{1}{c}{7} & \multicolumn{1}{c}{8} & \multicolumn{1}{c}{9} & \multicolumn{1}{c}{10} & \multicolumn{1}{c|}{ALL} \\ \hline
~~~~1234     & 61.1\%                               & 57.1\%                & 58.8\%                & 67.7\%                & 55.2\%                & 79.0\%                & 74.4\%                & 85.7\%                & 84.3\%                & 91.3\%                 & 71.0\%                   \\
%other    & 38.9\%                               & 42.9\%                & 41.3\%                & 32.3\%                & 44.8\%                & 21.0\%                & 25.6\%                & 14.3\%                & 15.7\%                & 8.8\%                  & 29.0\%                   \\
\multicolumn{1}{|r|}{2134}     & 1.1\%                                & 1.2\%                 & 3.8\%                 & \textbf{6.5\%}        & \textbf{12.1\%}       & \textbf{8.1\%}        & \textbf{10.3\%}       & \textbf{7.1\%}        & \textbf{5.7\%}        & \textbf{1.3\%}                  & \textbf{5.3\%}           \\
\multicolumn{1}{|r|}{3214}     & 6.7\%                                & 6.0\%                 & \textbf{7.5\%}        & 4.8\%                 & 3.4\%                 & 0.0\%                 & 0.0\%                 & 1.8\%                 & 0.0\%                 & 0.0\%                  & 3.2\%                    \\
\multicolumn{1}{|r|}{4321}     &\textbf{17.8\%} & \textbf{8.3\%}        & 3.8\%                 & 4.8\%                 & 1.7\%                 & 3.2\%                 & 1.3\%                 & 0.0\%                 & 2.9\%                 & 0.0\%                  & 4.9\%                    \\ \cline{1-12}
T\textcolor{black}{R}M      & 91.1\%                               & 77.4\%                & 77.5\%                & 88.7\%                & 75.9\%                & 91.9\%                & 87.2\%                & 98.2\%                & 95.7\%                & 93.8\%                 & 87.5\%                   \\ \hline
\end{tabular}
\end{adjustbox}
\caption{Share of most commonly submitted ROLs and total share of top-\textcolor{black}{rank} monotone ROLs for each priority score. Most common deviation from truth-telling for each priority score, i.e., for each column in Table \ref{tab:li2} in the appendix, is in bold.}
	%The first two rows indicate the shares of truthful and manipulated ROLs, respectively.
%The next three rows show the most common misrepresentations
%and their fraction of all submitted ROLs (most common in bold face). The final row states the
%fraction of top-\textcolor{black}{rank} monotone ROLs of all submitted ROLs.
%The columns represent each priority score with the last one being an aggregation over all scores.}
\label{tab:li1}
\end{table}

Table \ref{tab:li1} summarizes several noteworthy observations regarding our theoretical results.
Table \ref{tab:li2} in the appendix provides more details.
\textcolor{black}{While 71\% of the submitted ROLs are truthful (first row, last column), 87.5 \% of ROLs are top-rank monotone (last row, last column), i.e., can be strictly optimal for some attainability distributions.}
More importantly, the most common misrepresentations for each priority score (for each priority score in bold face)
are indeed all top-\textcolor{black}{rank} monotone.
Moreover, the rates of these misrepresentations move according to the intuitions suggested by our model.
ROL $(4,3,2,1)$ is most common among low scores, ROL $(3,2,1,4)$ among lower intermediate scores,
and ROL $(2,1,3,4)$ among higher intermediate scores.
As suggested by Example \ref{ex}, higher scores are more likely to submit truthful ROLs. 

Proposition \ref{prop:order} implies that any manipulation of the ROL 
will concern the most preferred schools:
the true ROL is strictly optimal if and only if it is strictly optimal to rank school 1 first.
This insight helps us to provide necessary and sufficient conditions on the loss parameter
which determine whether a manipulation of the true ROL is profitable.
The attainability probability $p_1$ only depends on beliefs about what other students do and
their priority relative to our representative student.
Hence, Proposition \ref{prop:bounds} gives precise bounds on when \textcolor{black}{mechanisms that are strategy-proof with respect to standard preferences are} incentive-compatible
for loss-averse students based only on fundamentals that are exogenous in this section.
These bounds are strict in the sense that for any $p_1\in[\frac{1-\nicefrac{1}{\Lambda}}{2},1-\nicefrac{1}{\Lambda}]$ the answer to whether truthfulness is optimal depends on other attainability probabilities and also the cardinal utilities.

\begin{definition}
\textcolor{black}{We say a school $i$ is exclusive if $A_i=1$ implies $A_j=0$ for some $j\neq i$.}\footnote{\textcolor{black}{Hence, if a school is exclusive then attainability at that school implies non-attainability at some other school.} For instance, a boy school would be exclusive, if there were a girl school in the set of schools. Evidently, if a school is  exclusive with other schools, then the rank of the  exclusive school among these schools in an ROL is inconsequential for attainability, and hence multiple ROLs induce the same outcome lottery. \textcolor{black}{While this condition is implied by the condition in Definition \ref{def:full-sup}, it is much weaker.}}
\end{definition}

%The proof in the appendix relies on the intuitions behind Proposition \ref{prop:order}.

\begin{prop}\label{prop:bounds}
Suppose the most preferred school is not exclusive. Let $p_1$ be the probability that a student's most preferred school is attainable.
\begin{enumerate}
  \item If $p_1>1-\nicefrac{1}{\Lambda}$, then the  true ROL is strictly optimal.
  \item If $p_1<\frac{1-\nicefrac{1}{\Lambda}}{2}$, then the true ROL is strictly suboptimal.
\end{enumerate}
\end{prop}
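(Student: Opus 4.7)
My plan is to base both parts on one direct identity: the utility difference between the true ROL and the ``swap-with-2'' ROL $L^2 = (2, 1, 3, \ldots, m)$, obtained from truth by exchanging the top two schools. Since in both ROLs the event ``the first $k-1$ listed schools are all unattainable'' for any $k \ge 3$ depends only on the set $\{A_1 = 0, A_2 = 0\}$ (which is unchanged by the swap), the swap only shifts probability mass $\epsilon := \Pr(A_1 = 1, A_2 = 1)$ from school 1 to school 2, leaving every $f_s$ with $s \ge 3$ unchanged. Substituting the two lotteries into \eqref{eq:payoff-distance}, the classical component contributes $\epsilon (v_1 - v_2)$, and after collapsing the gain-loss difference using $\sum_{s \ge 3} f_s = 1 - f_1 - f_2$ I would obtain the closed form
\begin{equation*}
U(\text{truth}) - U(L^2) = \epsilon\, (v_1 - v_2) \left[ 1 - \Lambda(1 - 2 p_1 + \epsilon) \right].
\end{equation*}

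For part 2, non-exclusivity of school 1 ensures $\epsilon > 0$ (taking $j = 2$ in the definition of exclusivity), and the hypothesis $p_1 < (1 - 1/\Lambda)/2$ rearranges to $1 - 2 p_1 > 1/\Lambda$, making the bracket strictly negative. Hence $L^2$ strictly beats truth, so truth is strictly suboptimal. For part 1, the trivial bound $\epsilon \le p_1$ gives $1 - 2 p_1 + \epsilon \le 1 - p_1$; the hypothesis $p_1 > 1 - 1/\Lambda$ then yields $\Lambda(1 - p_1) < 1$, so the bracket is strictly positive. Combined with $\epsilon > 0$ (by non-exclusivity), truth strictly beats $L^2$. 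To conclude that truth is strictly optimal against \emph{all} alternative ROLs, I would invoke Proposition~\ref{prop:order}(a): any strictly optimal ROL must be top-rank monotone, so it suffices to compare truth to the remaining top-rank monotone candidates $L^k = (k, k-1, \ldots, 1, k+1, \ldots, m)$ for $k \ge 2$ and their truncations (ROLs with one or more schools dropped behind the outside option).

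The main obstacle is extending the strict inequality of part 1 from $L^2$ to all $L^k$ with $k \ge 3$. Unlike $L^2$, the transition from truth to $L^k$ is not a single adjacent swap but a rearrangement of the first $k$ positions whose mass-shift structure depends on the joint distribution of $(A_1, \ldots, A_k)$. My approach would be to use the telescoping decomposition $U(\text{truth}) - U(L^k) = \sum_{j=2}^{k} \bigl[ U(L^{j-1}) - U(L^j) \bigr]$, apply a swap-type identity to each increment $L^{j-1} \to L^j$, and show that the worst-case bound for each increment is no tighter than the bound $p_1 > 1 - 1/\Lambda$ derived for $L^2$. Intuitively, higher-$k$ deviations trade away more classical utility for less dispersion reduction per unit of mass shifted, so the binding constraint across all top-rank monotone alternatives remains the one derived from the swap-with-2. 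A parallel and slightly simpler argument treats ROLs that drop schools below the outside option, where the shift effectively moves mass onto $f_m$ with an analogous closed form.
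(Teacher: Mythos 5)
Your part 2 is essentially the paper's own argument: compare truth with $(2,1,3,\dots,m)$ via the adjacent-swap identity (Lemma \ref{flip} with $x=1$, $y=2$), use non-exclusivity to get $\epsilon>0$, and derive the contradiction from $p_1<(1-\nicefrac{1}{\Lambda})/2$. Your closed form $U(\text{truth})-U(L^2)=\epsilon(v_1-v_2)\left[1-\Lambda(1-2p_1+\epsilon)\right]$ is correct and matches what Lemma \ref{flip} gives in this special case.

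Part 1, however, contains a genuine gap. The only rigorously established step is that truth strictly beats $L^2$; the extension to all other ROLs is deferred to a telescoping scheme whose key claim --- that each increment $U(L^{j-1})-U(L^j)$ is controlled by the same bound $p_1>1-\nicefrac{1}{\Lambda}$ --- is asserted on intuition rather than proved. This is not a routine verification: by Lemma \ref{flip}, whether promoting school $j$ past schools $j-1,\dots,1$ is profitable depends on the joint attainability of schools $2,\dots,j$ and on the cardinal values $v_2,\dots,v_j$ through the terms $\tfrac{v_x+v_y-2v_s}{v_x-v_y}$, none of which is pinned down by $p_1$. Moreover, your candidate list is incomplete: top-rank monotone ROLs are not exhausted by the $L^k$ and their truncations, because the reversed block $k-1,\dots,1$ may be interleaved with the increasing block $k+1,\dots,m$ (e.g., $(2,3,1,4,\dots)$ is top-rank monotone), so even a successful telescoping over the $L^k$ would not cover all alternatives. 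The paper sidesteps both issues with a single global comparison: for any ROL not ranking school 1 first, promote school 1 to the top while keeping the remaining order fixed, and let $\varepsilon_s=f_s-\overline f_s\le 0$ be the mass taken from each $s\ge2$; using \eqref{eq:payoff-distance}, the utility gain is bounded below by $\sum_{s\ge2}\varepsilon_s(v_s-v_1)\left[1-\Lambda(1-p_1)\right]$, which is strictly positive when $p_1>1-\nicefrac{1}{\Lambda}$ and school 1 is not exclusive (the latter forces $\sum_{s\ge 2}\varepsilon_s<0$). This shows in one stroke that no optimal ROL can fail to rank school 1 first, and together with Proposition \ref{prop:order}(a) (the only top-rank monotone ROL ranking school 1 first is the true one) it yields strict optimality of truth. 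I recommend replacing your telescoping plan with this direct promotion argument.
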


\textcolor{black}{This proposition implies that students with sufficiently high priority at their preferred school} report truthfully 
whereas \textcolor{black}{ students with sufficiently low priority} misrepresent whenever seats at their preferred school are scarce.
This result is in line with the evidence that suggests a causal relationship between priority
and truthfulness mentioned in our introduction and Table \ref{tab:li1}.

An important implication of the result is that students' beliefs are crucial.
That is, one of the advantages of strategy-proof mechanisms, namely, the irrelevance
of priors, vanishes in our setting.
Importantly, we have made no assumptions on whether the beliefs determining the attainability
probabilities are rational.
Consequently, EBLA is a channel which renders other well-documented biases
distorting the beliefs as decisive.
Here, an overconfident student is more likely to be truthful as she overestimates
her chances of getting into her favorite school.
Hence, overconfidence and loss aversion countervail each other in terms of incentive compatibility.
Indeed, \citet{rees2018b} find that overconfident participants are more likely to be truthful.\footnote{In their online experiment, participants
completed a test on logical reasoning ability and afterwards estimated the percentage of other
participants they outperformed. They deem a participant overconfident if they
overestimated their percentile rank.}
Without our theory, this observation may appear counterintuitive as this bias usually steers behavior
away from a rational unbiased benchmark.

\textcolor{black}{Incomplete ROLs} are prevalent \textcolor{black}{even when truncation is not mandatory}.
Since constraining the ROLs to a fixed number of schools can destroy strategy-proofness, economists often advocate against such restrictions.
While prohibiting complete ROLs introduces strategic motifs into \textcolor{black}{strategy-proof mechanisms} with standard preferences,
such motifs are already present in our setting. As in our models truncations formally  correspond to ranking a school behind the outside option, Proposition \ref{prop:order} implies the following statement on truncations:

\begin{cor} \label{cor:drop}
It is never strictly optimal to drop some desirable school $k$ from the ROL, 
but list some preferred school $\ell<k$.
\end{cor}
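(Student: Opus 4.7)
The plan is to derive the corollary as an immediate consequence of Proposition \ref{prop:order}(a), which states that every strictly optimal ROL is top-rank monotone. Specifically, I would show that any ROL dropping a desirable school $k$ while simultaneously listing some preferred school $\ell<k$ cannot be top-rank monotone, and then invoke Proposition \ref{prop:order}(a) to conclude.

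Let $L=(s_1,\dots,s_m)$ be an ROL satisfying the hypothesis, with $k$ placed after the outside option $m$ and $\ell$ placed before $m$. The appearance of $\ell$ before $m$ forces $s_1\neq m$, so $s_1<m$ in the labeling (recall $v_1>\dots>v_{m-1}>v_m=0$). I would then apply the second clause of top-rank monotonicity directly to the outside option: because $m>s_1$, every school $s_j$ with $s_j<m$ -- that is, every desirable school -- must be ranked before $m$ in $L$. In particular, $k$ must precede $m$, contradicting the hypothesis that $k$ is dropped. Hence $L$ fails top-rank monotonicity, and by Proposition \ref{prop:order}(a) it cannot be strictly optimal when $\Lambda>1$. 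For $\Lambda\leq 1$, CPE respects first-order stochastic dominance by \citet[Proposition 1]{masatlioglu2016}, and since the truthful ROL (which drops no school) induces a lottery that first-order stochastically dominates $L$'s, the truthful ROL weakly beats $L$, so $L$ is not strictly optimal in this regime either.

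The main point is interpretive rather than computational: the hypothesis $\ell<k$ is used only to ensure that a real school sits at the top of $L$. Once that is secured, top-rank monotonicity applied to the outside option itself immediately rules out dropping any desirable school, so the proof collapses to a one-line application of Proposition \ref{prop:order}(a) without requiring any case distinction on $s_1$.
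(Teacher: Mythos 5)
Your strategy of deducing the corollary from Proposition \ref{prop:order}(a) is the right one --- the paper offers no separate proof and presents the corollary as an immediate consequence of top-rank monotonicity, since dropping a school is the same as ranking it behind the outside option. However, your execution rests on a literal reading of the second bullet of the definition that proves too much. You apply that clause to the outside option $m$ to conclude that \emph{every} desirable school must precede $m$ whenever $s_1\neq m$. If that were the correct reading, no ROL that lists one desirable school while dropping another could ever be strictly optimal --- but the paper explicitly asserts the opposite: ROL $(3,4,2,1)$ is unshaded in Table \ref{tab:ROL} (hence top-rank monotone and strictly optimal for some attainability distribution by Proposition \ref{prop:order}(b)), ROL $(2,3,1)$ is strictly optimal for intermediate $\Lambda$ in Figure \ref{fig:ex}, and Claim (i) in the proof of Proposition \ref{prop:order}(b) constructs environments in which the optimal ROL ranks school 1 behind the outside option while listing school 2. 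The content that Proposition \ref{prop:order}(a) actually establishes (via Lemma \ref{lem:characterize}) is the partition statement in the paper's informal gloss: schools preferred over the top-ranked school appear in \emph{reversed} preference order, schools not preferred over it appear in \emph{true} preference order, and there is no cross-restriction between the two groups. The outside-option clause therefore only forces $k$ to precede $m$ when $k$ is itself not preferred over the top-ranked school.

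The missing piece is precisely the case you dismiss: when the dropped school satisfies $k<s_1$. There the second clause is silent, and you must invoke the hypothesis $\ell<k$, which you claim is needed only to secure $s_1\neq m$. Since $\ell<k<s_1$, both $\ell$ and $k$ are preferred over the top-ranked school, so the first clause requires them to appear in reversed order, i.e., $k$ before $\ell$; but $k$ is ranked behind the outside option while $\ell$ is ranked before it, so $\ell$ precedes $k$ --- a contradiction. Together with your (correct) argument for the case $k>s_1$ and your FOSD remark for $\Lambda\le 1$, this closes the proof. So the route is right, but the ``one-line, no case distinction'' version is unsound, and the hypothesis $\ell<k$ does essential work rather than the cosmetic work you assign to it.
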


\section{Strategic Interaction \textcolor{black}{in DA}} \label{sec:CBNE}

{\color{black}
In this section, we fist show that choice-acclimating Bayesian Nash equilibria exist in our framework when type spaces are finite. We then turn to the static student-proposing deferred acceptance mechanism (DA) and rationalize the prevalent district-school bias as an equilibrium phenomenon in a stylized setting with district and elite schools.

\cite{dato2017expectation} show the general nonexistence of (mixed-strategy) choice-acclimating Bayesian Nash equilibria (CBNE). However, they make a particular assumption about the interpretation of mixed strategies. In their model, the player herself is uncertain about which (pure strategy) action from her mixed strategy she will use when her reference point forms. In the following, our interpretation of a mixed strategy rather follows, e.g., \cite{rubinstein1991comments}, who regards mixed strategies either ``as the distribution of the pure choices in the population'' or as ``a plan of action which is dependent on private information which is not specified in the model.'' In both interpretations, the player knows the outcome of her randomization over actions when forming her reference point. }

\textcolor{black}{In our existence proof, we focus on the case of finite type spaces. Hence, a mixed strategy for player $i$ is a mapping $\sigma_i:\Theta_i\rightarrow \Delta(\mathfrak{S}(\mathcal{S}))$, where $\Delta(\mathfrak{S}(\mathcal{S}))$ is the probability simplex over all ROLs. 
\textcolor{black}{Since the type space is finite, we can express function $\sigma_i$ as a vector of size $|\Theta_i|$.}}

{\color{black}
\begin{definition}
A mixed strategy profile $(\sigma_1,...,\sigma_n)$ is a choice-acclimating Bayes Nash equilibrium (CBNE) if for each student $i$ and each $\theta_i\in\Theta_i$, every ROL that is played with positive probability under $\sigma_i(\theta_i)$ is a CPE for student $i$, provided all other students follow the strategy prescribed in the mixed strategy profile.
\end{definition}

That is, in a CBNE each student $i$ anticipates other students' strategy profile $\sigma_{-i}$ and chooses a (pure strategy) CPE among the set her of CPEs. Since she is indifferent between any of the CPE, she may mix in her pure strategy choices according to the probabilities described by $\sigma_i(\theta_i)$.
\textcolor{black}{
In the appendix, we slightly adapt the existence proof by \cite{nash1951non} to reference-dependent preferences and asymmetric information. }

\begin{prop}\label{nash_exist}
Let $\Theta_i$ be finite for all students $i$. Then, a CBNE exists. 
\end{prop}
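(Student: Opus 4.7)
The plan is to apply Kakutani's fixed-point theorem to a best-response correspondence, following Nash's classical argument with a minor adaptation to reference-dependent preferences. Since each $\Theta_i$ is finite and $\mathfrak{S}(\mathcal{S})$ is finite, the mixed-strategy space for student $i$ is $\Sigma_i := \Delta(\mathfrak{S}(\mathcal{S}))^{|\Theta_i|}$, a finite product of probability simplices. Hence $\Sigma_i$ is a nonempty, compact, convex subset of Euclidean space, and so is the joint profile space $\Sigma := \prod_i \Sigma_i$.

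First, I would define the best-response correspondence. Fix a profile $\sigma_{-i}$. Averaging over the prior on $\theta_{-i}$ and over the randomizations in $\sigma_{-i}$, the mechanism together with the true priorities yields for each own type $\theta_i$ an attainability distribution $P(\cdot \mid \theta_i, \sigma_{-i})$ over states $(A_s)_{s\in\mathcal{S}}$. By Lemma \ref{lem:acc} and \eqref{eq:probs}, each pure ROL $L$ then induces a match lottery $F(L \mid \theta_i, \sigma_{-i})$ with CPE value $U(\theta_i, F(L \mid \theta_i, \sigma_{-i}))$ given by \eqref{eq:payoff-distance}. Under our interpretation (the realized pure ROL is known before the reference forms), the set of CPE best responses of type $\theta_i$ against $\sigma_{-i}$ is $B_i(\theta_i, \sigma_{-i}) := \argmax_{L \in \mathfrak{S}(\mathcal{S})} U\bigl(\theta_i, F(L \mid \theta_i, \sigma_{-i})\bigr)$, and the player-level best-response correspondence is $\mathrm{BR}_i(\sigma_{-i}) := \prod_{\theta_i \in \Theta_i} \Delta\bigl(B_i(\theta_i, \sigma_{-i})\bigr)$. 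The joint correspondence is $\mathrm{BR}(\sigma) := \prod_i \mathrm{BR}_i(\sigma_{-i})$.

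Next, I would verify Kakutani's hypotheses. Nonemptiness and convexity of $\mathrm{BR}(\sigma)$ are immediate: the argmax over the finite set $\mathfrak{S}(\mathcal{S})$ is nonempty, and a probability simplex over any nonempty finite set is convex. For upper hemicontinuity, observe that the attainability probabilities are polynomials in the entries of $\sigma_{-i}$, that $F(L \mid \theta_i, \sigma_{-i})$ is linear in these probabilities by \eqref{eq:probs}, and that $U(\theta_i, F)$ is polynomial in $F$ by \eqref{eq:payoff-distance}; hence the objective is continuous in $\sigma_{-i}$ for each fixed $L$. Berge's maximum theorem then yields upper hemicontinuity of $B_i$, and since $\mathfrak{S}(\mathcal{S})$ is finite, this transfers to $\Delta(B_i)$: if $\sigma^n_{-i} \to \sigma_{-i}$, upper hemicontinuity of $B_i$ gives $B_i(\theta_i, \sigma^n_{-i}) \subseteq B_i(\theta_i, \sigma_{-i})$ eventually, so any limit of points in $\Delta(B_i(\theta_i, \sigma^n_{-i}))$ lies in $\Delta(B_i(\theta_i, \sigma_{-i}))$. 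Thus $\mathrm{BR}$ has closed graph with nonempty, convex values, and Kakutani delivers a fixed point $\sigma^* \in \mathrm{BR}(\sigma^*)$, which is by construction a CBNE.

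The main subtlety, and the only real obstacle, is the interpretation of mixed strategies. Under our reading---consistent with \citet{rubinstein1991comments} and spelled out in the text---each student's realized pure ROL is determined before her reference point forms, so the CPE condition is evaluated separately for each pure action in the support of $\sigma^*_i(\theta_i)$, and a player willing to mix between $L$ and $L'$ is simply indifferent in CPE value between them. This is precisely why the general nonexistence result of \citet{dato2017expectation}, which rests on the opposite interpretation (the reference point is formed with respect to the mixture itself and hence depends nonlinearly on the mixing weights), does not apply here. Apart from this conceptual point, the argument is a standard Kakutani existence proof; finiteness of the type space is what guarantees that the objective is a polynomial and the strategy space is finite-dimensional, compact, and convex.
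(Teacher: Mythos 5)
Your proof is correct, but it takes a genuinely different route from the paper's. You apply Kakutani's fixed-point theorem to the best-response correspondence $\mathrm{BR}_i(\sigma_{-i})=\prod_{\theta_i}\Delta\bigl(B_i(\theta_i,\sigma_{-i})\bigr)$, verifying nonemptiness, convexity, and closed graph via Berge's theorem. The paper instead adapts the original continuous-map construction of \citet{nash1951non}: it defines the per-type average utility $p_i(\sigma,\theta_i)$, the improvement terms $\varphi_i(r,\theta_i,\sigma)=\max\{0,\hat U_i(\theta_i,r,\sigma_{-i})-p_i(\sigma,\theta_i)\}$, and the normalized update map $T$, then applies Brouwer's fixed-point theorem and argues that at any fixed point all improvement terms vanish. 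Both arguments rest on exactly the same two pillars: (i) continuity (indeed polynomiality) of the expected utility $\hat U_i(\theta_i,r,\sigma_{-i})$ in the opponents' mixing probabilities, which you and the paper both establish by composing \eqref{eq:probs} with \eqref{eq:payoff-distance}; and (ii) the interpretation of mixed strategies under which the realized pure ROL is known before the reference point forms, so that the CBNE condition reduces to requiring every pure ROL in the support to lie in the argmax over pure ROLs --- which is precisely how you define $\mathrm{BR}_i$, and precisely what the paper's vanishing-$\varphi_i$ argument delivers. Your route is shorter and more standard in modern textbook treatments but invokes the heavier machinery of Kakutani plus Berge; the paper's route is more self-contained, needing only Brouwer, and makes the adaptation of Nash's argument to reference-dependent preferences explicit. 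You also correctly identify why the nonexistence result of \citet{dato2017expectation} does not bite, which matches the paper's discussion preceding the proposition.
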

}

%\begin{prop}\label{CBNE}
%With \textcolor{black}{common priorities} and discrete types, 
%there exists an essentially unique choice-acclimating Bayesian Nash equilibrium in pure strategies. % For any student $i$ and any type $( \mathbf v_i , \lambda_i)$ there is a cut-off score $\overline\omega(\mathbf v_i , \lambda_i)<1$ such that truthful reporting is a best response if and only if her score is weakly above $\overline\omega(\mathbf v_i , \lambda_i)$. The cutoff is weakly increasing in $\lambda_i$.
%\end{prop}

\subsubsection*{Misallocations with elite schools and the district-school bias}

We now \textcolor{black}{turn to the school-proposing deferred acceptance (DA) mechanism and} derive the district-school bias,
first explored by \citet{chen2006}.\footnote{\citet{hakimov2020} state the phenomenon that ``the district school (or safe school) is ranked higher in
the reported list than in the true preferences,'' and document its prevalence
in a wide range of experiments. In our two-school setting, it is equivalent to the small-school bias.}
For complex preference structures, there are numerous interdependencies, 
each giving rise to potential risks of instability. For now, we focus on the district-school bias and neglect 
other sources of misrepresentations, such as differences in preferences for schools within the set of desirable schools.

Suppose that there is a set $\mathcal E \subset \mathcal S$ of elite schools. 
Each school from this set is unambiguously preferred by each student over some 
safe outside option, the district school.
To simplify, we assume that all elite schools induce the same match utility $v>0$,
whereas the safe outside option induces a normalized utility of zero.
\textcolor{black}{If we consider this setting with standard preferences, DA implements the same allocation as TTC, i.e., the efficient allocation, in which the highest-score students attend the elite schools.}

Let a student's loss dominance $\Lambda_i$ be independently drawn from a common distribution with discrete support
$\{ \Lambda^0, \Lambda^1,  \dots,  \Lambda^l \}$. Since truthful reporting is a dominant strategy for any $\Lambda<1$, we can combine all loss dominance parameters in $[0,1]$ into $\Lambda^0$ and assume, without loss of generality, $\Lambda^0 =0$ and $\Lambda^1>1$.
By the following lemma, we can, without loss of generality, 
focus on just one elite school with capacity $q = \sum_{s \in \mathcal E} q_s<n$ 
instead of a set $\mathcal{E}$ of elite schools.

\vspace{-8pt}\begin{lem}\label{lem:elite}
For any belief on the attainability probabilities of elite schools, the best response is to either 
rank all elite schools adjacently (in any order) or no elite school before the district school.
\end{lem}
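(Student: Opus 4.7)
The plan is to reduce the problem to a one-dimensional convex optimization in the probability $p$ of being matched to an elite school. Since all elite schools yield match utility $v$ and the district yields $0$, any ROL induces an outcome lottery that is binary in utility: the student obtains $v$ with probability $p$ and $0$ with probability $1-p$. By \eqref{eq:payoff-distance}, pairwise comparisons within the elite block drop out of the gain-loss term because the utility difference is zero, so the expected utility collapses to
\[
U(p) = pv - \Lambda v \, p(1-p),
\]
which is convex in $p$.

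Next, I would characterize the feasible values of $p$. By Lemma~\ref{lem:acc}, the mechanism assigns the top-ranked attainable school, and because the district is always attainable, any elite school ranked behind it has match probability zero. Hence only the subset $T \subseteq \mathcal{E}$ of elite schools ranked above the district matters, and with such a set the elite match probability equals $p_T = \text{Pr}(\exists s \in T : A_s = 1)$. The internal order within $T$ reshuffles match probabilities among elite schools but leaves $p_T$ and hence $U$ unchanged, since elite schools are symmetric in utility. Thus every ROL is outcome-equivalent to one that lists some $T \subseteq \mathcal{E}$ in arbitrary order immediately before the district.

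The feasible values of $p_T$ form a finite subset of $[0, p_{\mathcal{E}}]$ with endpoints attained at $T = \emptyset$ and $T = \mathcal{E}$, respectively. By convexity of $U$, every such $p_T$ satisfies $U(p_T) \le \max\{U(0), U(p_{\mathcal{E}})\}$, and hence a best response chooses either $T = \emptyset$ (no elite school before the district) or $T = \mathcal{E}$ (all elite schools adjacently before the district, in any order). The main subtlety is the outcome-equivalence step: I need to justify that any elite school omitted or placed behind the district can be freely moved to the tail of the elite block without altering the lottery, which follows because the always-attainable district truncates the ROL at its own position.
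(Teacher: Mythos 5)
Your proof is correct and follows essentially the same route as the paper: reduce the choice of ROL to the choice of the probability $f$ of matching with some elite school ranked before the district, observe that expected utility is $fv-\Lambda f(1-f)v$, and conclude from convexity in $f$ that the optimum is attained at an extreme point, i.e., listing either all or none of the elite schools before the district. The extra care you take with the outcome-equivalence step (the always-attainable district truncating the list) is a sound elaboration of what the paper leaves implicit.
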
 \vspace{-8pt}

In equilibrium, a student's decision as to whether to apply to the elite school depends on 
her probability of attaining it, i.e., the probability that fewer than $q$ students of higher score apply there. 
Hence, the attainability probability is a function which is weakly increasing in 
her score $\omega$ and depends on the other students' reporting strategy $\sigma_{-i}$. 
Fixing $\sigma_{-i}$, the payoff function \eqref{eq:payoff-distance} implies that 
listing the elite school before the outside option is optimal for any $\omega>0$ if and only if
\begin{equation}\label{eq:elite}
  f(\omega)v-\Lambda f(\omega)(1-f(\omega))v\geq 0  \iff \Lambda \leq \frac{1}{1-f(\omega)}
 \iff  f(\omega) \geq 1 - \frac{1}{\Lambda}.
\end{equation}

We now turn to strategic interaction in the elite school problem. In general, strategic interaction between loss-averse agents is difficult to analyze and has to date only been sparsely studied. The following assumption puts more structure on school priorities, and gives a handle on strategic interaction in this mechanism. 

\begin{ass}[\textcolor{black}{Common priorities}] \label{ass:score}
$w_{i,s}=\omega_i$ for all $s \in \mathcal S$, \textcolor{black}{and
$\omega_i$ is identically and independently distributed according to some continuous distribution $G$ for all $i\in\mathcal I$.}
\end{ass}

Under common priorities, all schools evaluate students according to a one-dimensional priority score. For instance, the score may represent the result of a general assessment test, such as the SAT or GRE.\footnote{In many countries and cities, all schools use the same centralized score to rank students. See \\ \citet[Table 1]{fack2019}.}

A key observation in the analysis of strategic interaction \textcolor{black}{in DA} with \textcolor{black}{common priorities} is that a student's match outcome is only affected by the behavior of other students with a higher score. 
By construction of DA, a student will only be rejected by a school she proposes to if 
this school also has too many applications from a students with higher scores. \textcolor{black}{Hence, a student of score $\omega$ anticipates correctly the reporting behavior of potential students of higher scores, and derives her probability $f(\omega)$ to receive a match with the elite school if she applies. She then only applies if her loss parameter $\Lambda$ satisfies $\Lambda \leq \frac{1}{1-f(\omega)}$.} 

%Consequently, for any score $\omega\in(0,1)$, there is a cutoff 
%$\overline\Lambda(\omega)=\nicefrac{1}{(1-f(\omega))}$ 
%such that applying to the elite school is a best response if and only if 
%$\Lambda\leq\overline{\Lambda}(\omega)$.

We say an equilibrium is essentially unique if
it is unique after imposing a rule for how students break ties
when they are indifferent between multiple ROLs.

\vspace{-8pt}\begin{lem}\label{lem:CBNE_elite}
In the elite school problem with common priorities, \textcolor{black}{DA has} an essentially unique choice-acclimating Bayesian Nash equilibrium. In this equilibrium, a student with 
loss dominance $\Lambda$ applies to the elite school if and only if her score is above some cutoff score $\overline\omega(\Lambda)\in(0,1)$, which is increasing in $\Lambda$. 
\end{lem}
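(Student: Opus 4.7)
The plan is to first show that in any CBNE each loss type plays a cutoff strategy in her score, and then pin down the equilibrium cutoff profile uniquely by solving a cascade of one-dimensional equations from the highest loss type downward.

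For the first part, I fix an arbitrary profile $\sigma_{-i}$ and invoke Assumption~\ref{ass:score}: student $i$ is admitted to the elite school iff fewer than $q$ competitors with strictly higher score also apply. Because the set of such competitors weakly shrinks in $\omega$, the attainability $f(\omega)$ is continuous and weakly increasing. Combined with the pointwise optimality condition $f(\omega) \geq 1 - \tfrac{1}{\Lambda}$ from \eqref{eq:elite}, this yields a cutoff best response $\overline\omega(\Lambda)$ for each type, and monotonicity in $\Lambda$ follows immediately because the threshold $1 - \tfrac{1}{\Lambda}$ is increasing in $\Lambda$ while $f$ is independent of the student's own type.

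For existence and uniqueness, the key observation is that the probability a randomly chosen competitor has score exceeding $\overline\omega^k:=\overline\omega(\Lambda^k)$ and applies,
\[
H(\overline\omega^k) \;=\; \Big(\textstyle\sum_{j\leq k}p_j\Big)\big(1-G(\overline\omega^k)\big) \;+\; \textstyle\sum_{j>k} p_j \big(1-G(\overline\omega^j)\big),
\]
depends only on cutoffs of equal or higher loss types. Since $f(\overline\omega^k) = \mbox{Pr}(\mathrm{Bin}(n-1, H(\overline\omega^k)) < q)$, the indifference condition $f(\overline\omega^k) = 1 - \tfrac{1}{\Lambda^k}$ becomes a single equation in $\overline\omega^k$, with the higher cutoffs as parameters. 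For $k=l$ it reduces to $\mbox{Pr}(\mathrm{Bin}(n-1, 1-G(\overline\omega^l)) < q) = 1 - \tfrac{1}{\Lambda^l}$, whose LHS is continuous and strictly increasing, running from $0$ at $\overline\omega^l=0$ (using $q<n$) to $1$ at $\overline\omega^l=1$, so the intermediate value theorem delivers a unique solution in $(0,1)$. Descending one $k$ at a time, the LHS of the analogous equation is again continuous and strictly increasing in $\overline\omega^k$, and at $\overline\omega^k = \overline\omega^{k+1}$ it already equals $1-\tfrac{1}{\Lambda^{k+1}}>1-\tfrac{1}{\Lambda^k}$ by the previous step, so IVT produces a unique $\overline\omega^k \in (0,\overline\omega^{k+1})$. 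The constructed profile is a CBNE by construction, and any other equilibrium must solve the same chain of equations, yielding essential uniqueness modulo tie-breaking on the measure-zero set of scores coinciding exactly with a cutoff.

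The main obstacle is establishing the displayed representation of $H(\overline\omega^k)$---specifically, its independence from lower cutoffs. This leans on the cutoff monotonicity from the first step: for any competitor of type $j<k$, the applying constraint $\omega_{\text{comp}} > \overline\omega^j$ is implied by $\omega_{\text{comp}} > \overline\omega^k$, so lower cutoffs drop out at $\omega = \overline\omega^k$. With this decoupling in hand, what remains is a straightforward descending cascade of one-dimensional IVT arguments rather than a genuinely high-dimensional fixed-point problem.
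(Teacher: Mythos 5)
Your proof is correct and follows essentially the same route as the paper's: both exploit that only higher-score competitors matter under common priorities, establish the cutoff structure of best responses and its monotonicity in $\Lambda$, and then pin down the cutoffs by a descending cascade from the highest loss type, each step being a one-dimensional monotone equation precisely because lower types' cutoffs drop out. The one point you leave unverified --- that the left-hand side at $\overline\omega^k=0$ lies below $1-1/\Lambda^k$, which is what interiority of the lower cutoffs actually requires --- is equally implicit in the paper's own iteration, so your argument matches the paper's level of rigor.
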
 \vspace{-8pt}

The insight that pessimistic loss-averse students shy away from applying
has relevant ramifications for affirmative action policy beyond the scope of this simplified model.
While the cutoff attainability probability in \eqref{eq:elite} only depends on the score and the loss dominance,
beliefs can be additionally skewed which leads directly to unfavorable allocations.
For instance, if some students expect their scores to be lower than they are, because of (perceived) discrimination, these students may not apply to the elite school
although they would be assigned a seat in the \textcolor{black}{desired} allocation.
Consequently, \textcolor{black}{in this example} DA aggravates the discrimination by discouraging truthful revelation, and whether this discrimination is real or caused by underconfidence or doubts about how schools assess abilities, is irrelevant. Thus, downplaying the cost of discrimination when marginalized students do not rank 
discriminating schools in DA is inherently flawed in models incorporating EBLA, because the submitted ROLs may not reflect true preferences in equilibrium.

\section{Possible remedies}  \label{sec:remedy}

In keeping with Proposition \ref{prop:bounds}, misreporting in our model is inherently linked to beliefs on attainability and thereby on how a student expects to compare to others in terms of priority. While we modelled the source of uncertainty about this relative priority as uncertainty about other students' priorities $\mathbf{w}_{-i}$ at schools, the same holds true when it stems from uncertainty about how the schools assess own abilities $\mathbf{w}_i$. Notably, this uncertainty is unlikely to depend on market size. Moreover, our model abstracts away from other inherent sources of uncertainty in the market than the relative priority, and other students' preferences that may aggravate the problem, such as institutional uncertainties. In general, any regulation that may help a student to better assess his own standing and capacities of schools is likely to reduce students' uncertainty and hence the misrepresentation of preferences. In this section, we study how the choice of the mechanism itself may mitigate the problem of uncertainty and help to prevent justified envy in the equilibrium allocation.

\subsubsection*{Static mechanisms}

If we restrict to static matching mechanisms, we can provide a negative result.
A static mechanism is any mechanism which asks students about their preferences only once without providing feedback 
on other students' preferences. 
Formally, a static matching mechanism consists of reporting spaces $R=\times_{i\in\mathcal I} R_i$ for each student $i$ and an allocation function $o$, mapping reported profiles $\mathbf{r} = (r_i)_{i\in\mathcal I}\in R$ into allocations.

\vspace{-8pt}\begin{lem} \label{no_static}
For any distribution of preferences, there exists a static mechanism that implements the 
student-optimal stable allocation as CBNE for all realizations if and only if DA is truthful.
\end{lem}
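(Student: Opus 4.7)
The plan is to prove both directions, with the nontrivial direction relying on a revelation-principle argument tailored to CPE. The ``if'' direction is immediate: DA itself is a static mechanism in which students simultaneously submit ROLs, and if truthful reporting constitutes a CBNE of DA, then by construction the realized allocation equals the student-optimal stable matching $\text{SOSA}(\theta)$ for every realization $\theta$.

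For the ``only if'' direction, I would adapt the revelation principle to the CPE environment. Suppose a static mechanism $M=(R,o)$ admits a CBNE $\sigma^*$ under which the realized allocation is $\text{SOSA}(\theta)$ for every realization of $\theta$ (and every realization of the randomization in $\sigma^*$). I would consider the direct mechanism $M'$ in which every student reports her type $\theta_i$ and the outcome function is $o'(\theta):=\text{SOSA}(\theta)$, and then show that truth-telling is a CBNE of $M'$. Fix a student $i$, a type $\theta_i$, and an alternative report $\theta'_i$. If $i$ reports $\theta'_i$ in $M'$ while the others report truthfully, the induced match distribution for $i$ is the distribution of $\text{SOSA}(\theta'_i,\theta_{-i})[i]$ under the common prior over $\theta_{-i}$. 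In $M$, if $i$ plays $\sigma^*_i(\theta'_i)$ while the others follow $\sigma^*_{-i}$, the implementation hypothesis applied to the hypothetical profile $(\theta'_i,\theta_{-i})$ implies that every pure realization of $(\sigma^*_i(\theta'_i),\sigma^*_{-i}(\theta_{-i}))$ yields allocation $\text{SOSA}(\theta'_i,\theta_{-i})$. Integrating over $\theta_{-i}$ therefore produces the same outcome lottery as reporting $\theta'_i$ in $M'$. Since CPE utility depends only on the induced outcome lottery, the CPE utility of reporting $\theta'_i$ in $M'$ equals the CPE utility of playing $\sigma^*_i(\theta'_i)$ in $M$. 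As $\sigma^*_i(\theta_i)$ is a CPE best response in $M$ by the CBNE property, truth-telling is a CPE best response in $M'$.

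Finally, since $\text{SOSA}(\theta)$ depends on types only through the induced ROLs (and the fixed priorities), restricting the message space of $M'$ to ROLs yields an outcome-equivalent mechanism that coincides with DA. Hence truth-telling being a CBNE of $M'$ amounts to truth-telling being a CBNE of DA, i.e., DA is truthful. The main obstacle is the well-known concern that revelation-principle arguments may fail under CPE because the reference point shifts when the strategy shifts; this concern is dissolved here by the deterministic nature of the implementation requirement, which forces the (possibly mixed) outcome in $M$ to coincide realization-by-realization with the deterministic outcome in $M'$, so that the same reference point is generated in both mechanisms.
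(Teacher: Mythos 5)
Your proof is correct and follows essentially the same route as the paper: the trivial ``if'' direction, and for the converse a revelation-principle construction that composes the outcome function with the equilibrium strategies to obtain a direct mechanism with a truthful CBNE, then passes to DA because the student-optimal stable match depends only on ordinal reports. Your added justification for why the revelation argument survives CPE---that the deviation's outcome lottery, and hence its choice-acclimated reference point, coincides across the two mechanisms---is a useful elaboration of a step the paper leaves implicit.
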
 \vspace{-8pt}

Hence, if DA fails, there is no hope for remedies in the class of static mechanisms. Our argument is akin to the revelation principle for static mechanisms. Intuitively, if a student prefers to avoid the ex-ante risk that comes with the implementation of the student-optimal matching, she will not reveal her preferences under any such mechanism.

\subsubsection*{Sequential mechanisms and serial dictatorship}

Since uncertainty is the source of students' deviations, 
the use of sequential mechanisms may mitigate this problem. 
A sequential mechanism enables feedback between different rounds and, hence, 
has the ability to alter beliefs before eliciting preferences. The following example shows how misreporting may be diminished in the elite school problem when we move to sequential student receiving DA.

\begin{example}
Consider the elite school problem with two students and $q=1$.
In the unique stable matching, the student with the higher score is assigned to the elite school whereas the lower-score student is matched with a district school. The student receiving DA implements this allocation. Indeed, both schools propose to the stronger student, she accepts at the elite school, leaving the district school for the lower-score student.
Under the \textcolor{black}{static} student-proposing DA, however, students report their preferences truthfully only if \eqref{eq:elite} holds.
Consequently, if the score of both students is below $1-\nicefrac{1}{\Lambda}$, 
both students attend the district school, and the match outcome is neither stable nor student optimal.
\end{example}

\textcolor{black}{Note that in this example the stable match is unique, and the static student proposing DA fails to implement it. Hence, by Lemma 4, any static mechanism fails to implement a stable match in this case. This shows that stability cannot generally be assured by any static mechanism if students hold reference-dependent preferences.}

At first sight, it may seem surprising that the sequential use of information enables a designer to go beyond 
what is achievable with static mechanisms, as it seems to violate the fundamental insight of 
the revelation principle that any sequential mechanism has a static direct equivalent, 
see, e.g., \cite{myerson1979}.
In settings with dynamic information and EBLA, 
however, this revelation principle does not apply.
As students evaluate outcomes with respect to their beliefs, 
information endogenously affects their preferences over alternatives.

Unfortunately, moving to student receiving DA will not foster truthful behavior in general. It is well known that even with standard preferences DA is not truthful for the receiving side, as it implements the optimal stable allocation for the proposing side. In principle, the receiving side could coordinate on their preferred stable match by strategically rejecting all alternatives. \textcolor{black}{Hence, whether a change to sequential student receiving DA will improve students' match outcomes is in general ambiguous.}

Under common priorities, however, the stable match is unique, and the student receiving DA simplifies considerably.
When all schools have the same preferences, all schools approach the same student
in the first step. Then, this student is aware that she has the highest score among all students and
is immediately accepted at the school she selects.
All other schools are rejected and apply to the second-highest-score student
who is then aware that she is now the highest-score student of the unmatched population
and that she is assigned to her selected school with certainty, and so on.
In short, the student receiving DA simply becomes sequential serial dictatorship in which
homogeneous priority scores determine the order in which students pick their school.
Because each student determines her match with certainty, the dominance
of choosing the most preferred among the available options is obvious regardless of EBLA.

\citet{li2017} compares the outcome of the student proposing DA with the outcome of a 
sequential serial dictatorship mechanism in a lab experiment. 
He finds that while in the student proposing DA 36\% of games do not end in the stable outcome 
as induced by the dominant strategy, this rate drops to 7\% under sequential serial dictatorship. 
He explains this finding by the fact that, in contrast to serial dictatorship in this setting, 
student-proposing DA is not obviously strategy-proof (OSP). 
A mechanism is OSP if for the equilibrium strategy the worst outcome is still weakly better
than the best possible outcome from any alternative strategy, where 
the only outcomes considered are those that follow from the information sets where both strategies first diverge. 
Hence, dominance in an OSP mechanism may be easier to detect by agents with cognitive limitations.

\citet{ashlagi2018} show in their Example 1 that the sequential serial dictatorship mechanism 
is in general OSP when the proposing side has homogeneous preferences. 
However, they show that for general preferences
it is impossible to construct an OSP mechanism which always implements stable match outcomes. 
%In particular, they identify acyclicality in preferences in the sense of \citet{ergin2002} 
%as a regularity condition that enables implementation of stable matchings with an OSP 
%mechanism.
%\footnote{In our context, a preference profile for schools over students is cyclical if there are three students $A,B,C$ and two schools $1,2$ such that $A \succ_1 B \succ_1 C \succ_2 A$, and it is acyclical if it is not cyclical.} 
%With EBLA the result seems to be even more negative. 
In Appendix \ref{sec:OSPvsEBLA}, we build on their Example 2 to demonstrate that 
a stable OSP mechanism may fail to induce stability with EBLA. 
This example sets the two concepts apart and suggests how to disentangle the explanations experimentally.
Our model conveys EBLA as an alternative explanation for the observed misrepresentations.
Rather than a mistake because of cognitive limitations,
we see them as the deliberate optimal choice of students who suffer from a behavioral bias.
%\footnote{In 
%reality both may play an important role for the observed patterns.
%This is confounded by the fact that in \citet{li2017} the rate of misrepresenting slightly declines with learning, 
%but remains well above the level of misrepresentations in the SSRDA.}

\section{Conclusion}

We have identified a possible reason why students play \textcolor{black}{seemingly} dominated strategies in \textcolor{black}{static strategy-proof mechanisms}.
The truthful equilibrium in dominant strategies may not be a choice-acclimating personal equilibrium (CPE)
with dominant expectation-based loss aversion (EBLA).
In other contexts, evidence consistent with dominant EBLA has been found 
in numerous experimental and field studies.
The notion that students forgo small chances to get into preferred schools to avoid disappointment
is therefore plausible.
Indeed, the costly deviations from the dominant truthful strategy are most pervasive
among low- and intermediate-priority students who want to get into competitive programs.
Our theoretical predictions fit this pattern in experimental and field data, and we also
provide a formalized framework for the pervasive district-school and small-school biases.
Our characterization of optimal play in Proposition \ref{prop:order} and \ref{prop:bounds}
is testable.

The extensive evidence of dominated play calls into question
the identification strategy to treat reported preferences as truthful.
Regarding affirmative action this insight is important, because the observation that people of certain demographics do not reveal a preference for certain schools in \textcolor{black}{a static strategy-proof mechanism} does not imply that they do not want to go there.

\section*{Appendix}
\setcounter{subsection}{0}
\renewcommand{\thesubsection}{A.\Roman{subsection}}

\subsection{Relation to \citet{dreyfuss2019}} \label{sec:other-paper}

Similar to our paper,  \citet{dreyfuss2019} find that EBLA can explain non-truthful ROLs.
In their reduced form dynamic framework \`{a} la \citet{koszegi2009}, 
students enter the decision problem with a reference point given by the outside option, 
whereas in our decision problem students already anticipate the choices ahead of them, 
which is reflected in their reference point. Moreover,  \citet{dreyfuss2019} consider an extra period where uncertainty is resolved, which gives rise to additional gain-loss utilities. The essential intuition for how students use 
manipulations to shield off potential disappointment is, however, similar in both models.

We take the stylized approach that gains and losses are assigned when comparing to the value of 
other potential outcomes (narrow bracketing).  \citet{dreyfuss2019} take the opposite approach as they consider each school 
in a separate consumption dimension and assign gains and losses separately for each school. 
The reality is certainly somewhere in-between, as schools may be comparable in some aspects but not in others. 
We choose our modeling approach to draw a clear comparison to the existing experimental literature, 
where stakes are simply money, and values are hence fully comparable between 
schools.%\footnote{\citet{dreyfuss2019} take a similar approach in their empirical section when they analyze existing experiments.}

The uncertainty in \citet{dreyfuss2019} stems from iid shocks on how individual schools assess 
a student's abilities with respect to exogenously given school standards. 
This reduced form approach has two implications. First, it leaves no scope for strategic interaction 
between students and, hence, cannot be operationalized in the matching context immediately. Second, it implies that attainability probabilities are independent between schools, 
which is not the case in our model, not even under Assumption \ref{ass:score} and independently drawn scores. Assuming independent attainability probabilities would exclude the existence of ``popular schools," and we want to explain why students shy away from applying to such schools.

From the first theoretic insight that under EBLA there is scope for strategic misrepresentations, both papers proceed quite complementarily. While \citet{dreyfuss2019} comprehensively reevaluate 
the experimental data in \citet{li2017} in light of EBLA, 
we delve deeper into its theoretical implications, 
and analyze the set of rationalizable strategies, strategic interaction, and evaluate remedy mechanisms.

\subsection{OSP versus EBLA} \label{sec:OSPvsEBLA}

This section illustrates the distinction of the notion of robustness against EBLA and the concept of OSP. 
We start with the observation that robustness against EBLA does not imply that a mechanism is OSP. 
By Proposition \ref{prop:bounds}, all students will report truthfully in DA if their probability $p_1$ 
that their favorite school is attainable is sufficiently large.
This condition certainly does not imply that DA is OSP.
%Indeed, whenever attainability probabilities are non-degenerate in the sense that $p_1\notin\{0,1\}$ 
%and there is some other school $m$ with positive acceptance probability conditional 
%on not attaining school 1, truth telling will not always induce the most preferred match, 
%whereas any list $(m,1,...)$ will do so with positive probability.
For example, ranking the favorite school 1 second can yield a match with 1 as a best case, whereas 
the true ROL can lead to a worse match.

Building on Example 2 in \citet{ashlagi2018}, 
we now provide an example of acyclical preferences and an OSP mechanism 
which always implements the student-optimal stable matching with standard preferences, 
but fails to do so if students exhibit EBLA.
There are two students, $I = \{A,B\}$ and two schools, $S=\{1,2\}$. School 1 prefers student $A$ over $B$, whereas school 2 prefers student $B$ over $A$. Conversely, student $A$ prefers school 1 over
school 2 with probability $(1-\epsilon)$, and student $B$ prefers school 2 over school $A$ with probability of
$(1-\epsilon)$ for some small $\epsilon>0$.
%For illustration, think of a setting where
%schools give priority to students living close by, and students mostly but not always prefer their closest school.
Here, DA is not OSP. 
For instance, if student $A$ prefers school $2$ truth-telling is not obviously dominant 
as the true ROL $(2,1)$ may result in a match with school $1$, 
whereas ROL $(1,2)$ may result in a match with $2$ with positive probability.

\citet[Figure 2]{ashlagi2018} propose the following OSP sequential mechanism.
First, student $A$ is asked whether she prefers $1$ or $2$. If she prefers $1$, she is assigned
to $1$ and $B$ is assigned to $2$. If she prefers $2$, $B$ is asked for her preferences
which then determine the match outcome.
Because $B$ determines the match with certainty whenever she is asked, revealing her true preferences
is an obviously dominant strategy (and a sequential CPE at this final decision node).
If $A$ prefers school 1, deviating yields her either
$v_{A,1}$ or $v_{A,2}$ instead of a certain payoff $v_{A,1}>v_{A,2}$ 
such that the truth is both a sequential CPE and an obviously dominant strategy.

If $A$ prefers school 2, misrepresenting yields her a sure payoff of $v_{A,1}<v_{A,2}$ and
being truthful leads to a lottery over $v_{A,2}$ and $v_{A,2}$.
Because even the worst outcome from being truthful is not worse than the best (only)
outcome from deviating, the truth is an obviously dominant strategy, making
the mechanism OSP.
However, for any $\Lambda>1$, truth-telling is not a
sequential CPE for student $A$ if $\epsilon<1-\nicefrac{1}{\Lambda}$.
Hence, because of the uncertainty effect, even OSP mechanisms can fail to have a truthful sequential CPE.

\subsection{Proofs} \label{sec:proofs}

\begin{proof}[Proof of Lemma \ref{lem:acc}]
Take \textcolor{black}{any static strategy-proof mechanism that only collects ROLs of all players simultaneously and fix} an arbitrary ROL \textcolor{black}{of student $i$. L}et $s$ be the highest-ranked attainable school in it.

Suppose that under this ROL the student is matched with $s'$ ranked before $s$. But then, since $s'$ is unattainable (i.e., she would not get in if ranked first), she would prefer this ROL over her true ROL if $s'$ was her most preferred school, a contradiction to strategy-proofness.

Suppose that under this ROL she is matched with $s''$ ranked behind $s$. 
But then, if the ROL was true, she would prefer a match with $s$ over $s''$, and ranking $s$ first
would achieve this match, again a contradiction to strategy-proofness.
\end{proof}

\begin{proof}[Proof of Proposition \ref{prop:order}]
a)
We start with a practical lemma which identifies
when flipping two adjacently ranked schools in an ROL is profitable. \textcolor{black}{Fix an arbitrary attainability distribution $P$.}
Consider two otherwise identical ROLs swapping two adjacently ranked schools $x<y$, i.e.,
two ROLs $(...,x,y,...)$ and $(...,y,x,...)$. Let the former induce lottery $F =(f_s)_{s \in \mathcal S}$
and the latter induce lottery $\underline F =(\underline f_s)_{s \in \mathcal S}$ \textcolor{black}{as given by \eqref{eq:probs}}, and let
$\varepsilon$ denote the probability of 
$x$ and $y$ being attainable but no school which is ranked before.

\vspace{-8pt}\begin{lem}\label{flip}
	$U(\cdot, F)\geq U(\cdot, \underline F)$ if and only if
	\begin{equation}\label{eq:aux}
\frac{\varepsilon}{\Lambda}\geq \varepsilon\left(- \sum\limits_{s=1}^{x} f_s+ \varepsilon
+\sum\limits_{s=x+1}^{y-1} f_s \: \frac{v_x + v_{y} - 2 v_s}{v_x- v_{y}} + \sum\limits_{s=y}^{m} f_s \right)
	\end{equation} 
	with equality in \eqref{eq:aux} only in the case of indifference.
\end{lem}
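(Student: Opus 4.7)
The plan is to make explicit how swapping adjacent schools in the ROL perturbs the outcome lottery, and then to plug the perturbation into the expected-utility formula \eqref{eq:payoff-distance} and simplify.

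First I would use Lemma \ref{lem:acc} together with \eqref{eq:probs} to observe that the two ROLs $(\dots,x,y,\dots)$ and $(\dots,y,x,\dots)$ induce identical match probabilities at every school other than $x$ and $y$. The only probability mass that is shifted between the two outcomes is the event that both $x$ and $y$ are attainable but no school ranked above $x$ is---this event has probability $\varepsilon$ by definition. Hence $\underline f_x = f_x - \varepsilon$, $\underline f_y = f_y + \varepsilon$, and $\underline f_s = f_s$ for all $s\notin\{x,y\}$.

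Next I would compute the classical-utility part of $U(\cdot,F)-U(\cdot,\underline F)$, which is immediate: $\varepsilon(v_x-v_y)$. For the gain-loss part, I would use \eqref{eq:payoff-distance} and write $f_s f_r - \underline f_s\underline f_r = f_s g_r + g_s f_r - g_s g_r$ with $g_x=\varepsilon$, $g_y=-\varepsilon$, $g_s=0$ otherwise. Only pairs $(a,b)$ with $s\in\{x,y\}$ or $r\in\{x,y\}$ contribute, and I would split these contributions into four regions: $a<x$, $x<a<y$, $a=x,b=y$, and $a>y$. In each region the contribution collapses to a multiple of $\varepsilon(v_x-v_y)$: for $a<x$ and for $a>y$ the two pair-contributions combine into $\mp\varepsilon f_a(v_x-v_y)$, while for $x<a<y$ they combine into $\varepsilon f_a(v_x+v_y-2v_a)$, and the $(x,y)$-pair contributes $[\varepsilon(f_y-f_x)+\varepsilon^2](v_x-v_y)$.

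Collecting these pieces and factoring $\varepsilon(v_x-v_y)>0$ from the right-hand side, the inequality $U(\cdot,F)\ge U(\cdot,\underline F)$ becomes
\[
\frac{\varepsilon(v_x-v_y)}{\Lambda}\ge \varepsilon(v_x-v_y)\!\left[-\!\sum_{a<x}\!f_a + (f_y-f_x)+\varepsilon+\sum_{x<a<y}\!f_a\,\frac{v_x+v_y-2v_a}{v_x-v_y}+\sum_{a>y}\!f_a\right].
\]
Dividing through by $v_x-v_y$ and rewriting $-\sum_{a<x}f_a-f_x=-\sum_{s=1}^{x}f_s$ and $f_y+\sum_{a>y}f_a=\sum_{s=y}^{m}f_s$ reproduces \eqref{eq:aux} exactly; equality in \eqref{eq:aux} is the case $U(\cdot,F)=U(\cdot,\underline F)$.

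The main obstacle is purely bookkeeping: keeping track of the signs when the kernel $(v_s-v_r)$ changes depending on whether the index from $\{x,y\}$ appears as the smaller or larger of the pair. Once the four regions are treated carefully, the identity collapses onto \eqref{eq:aux} and no further structural argument is needed.
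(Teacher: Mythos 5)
Your proposal is correct and follows essentially the same route as the paper's proof: both isolate the single $\varepsilon$-shift between $f$ and $\underline f$, expand the gain-loss double sum from \eqref{eq:payoff-distance} region by region relative to the positions of $x$ and $y$, factor out $\varepsilon(v_x-v_y)>0$, and divide. Your $g$-decomposition $f_sf_r-\underline f_s\underline f_r=f_sg_r+g_sf_r-g_sg_r$ is a slightly tidier way of organizing the same five-case bookkeeping the paper carries out explicitly, and all regional contributions and the final identification with \eqref{eq:aux} check out.
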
 \vspace{-8pt}

\begin{proof}[Proof of Lemma \ref{flip}]
By  \eqref{eq:payoff-distance}, we have 
\begin{equation}\label{eq:utility_diff}
U(\cdot, F)-U(\cdot, \underline F)=\sum_{s=1}^m(f_s-\underline f_s)v_s-\Lambda\sum_{s=1}^m\sum_{r=s+1}^m(f_sf_r-\underline f_s\underline f_r)(v_s-v_r).
\end{equation}
For the matching probabilities $\underline{f}_s$ of ROL $(...,y,x,...)$,
it must be that $f_s=\underline{f}_s$ for $s\neq x,y$ and $\underline{f}_x=f_x-\varepsilon$, $\underline{f}_y=f_y+\varepsilon$, with $\varepsilon\geq 0$. Hence, if we split the second sum over $s$ into five summands, we obtain  \small
\begin{align*}
	&\sum_{s=1}^m\sum_{r=s+1}^m(f_sf_r-\underline f_s\underline f_r)(v_s-v_r) \\
	=& \sum_{s=1}^{x-1} \left[(f_sf_x-f_s \underline f_x)(v_s-v_x)+ (f_sf_y-f_s \underline f_y)(v_s-v_y)\right]+ \sum_{r=x+1}^m(f_xf_r-\underline f_x\underline f_r)(v_x-v_r)\\
	&+\sum_{s=x+1}^{y-1}(f_sf_y-f_s\underline f_y)(v_s-v_y)+\sum_{r=y+1}^m(f_yf_r-\underline f_yf_r)(v_y-v_r)+0\\
	= & \sum_{s=1}^{x-1} \left[ f_s\varepsilon (v_s-v_x)- f_s\varepsilon(v_s-v_y)\right]  + \sum_{r=x+1}^m\varepsilon f_r (v_x-v_r) + (\underline f_xf_y-\underline f_x\underline f_y)(v_x-v_y)\\
	&+\sum_{s=x+1}^{y-1}f_s(-\varepsilon)(v_s-v_y)+\sum_{r=y+1}^m(-\varepsilon)f_r (v_y-v_r)\\
		= & -\sum_{s=1}^{x-1}\varepsilon f_s (v_x-v_y)  + \sum_{r=y}^m\varepsilon f_r (v_x-v_r) + (f_x-\varepsilon)(-\varepsilon)(v_x-v_y)\\
	&+\sum_{s=x+1}^{y-1}\varepsilon f_s(v_x-2v_s+v_y)+\sum_{r=y+1}^m(-\varepsilon)f_r (v_y-v_r)\\
	=& \varepsilon(v_x-v_y) \left(- \sum_{s=1}^{x} f_s +\varepsilon +\sum_{s=y}^m f_s + \sum_{s=x+1}^{y-1} \frac{v_x-2v_s+v_y}{v_x-v_y} \right)
\end{align*}
\normalsize	
Since the difference in classical utility satisfies $\sum_{s=1}^m(f_s-\underline f_s)v_s=\varepsilon (v_x-v_y)$, we have $U(\cdot, F)-U(\cdot, \underline F)\geq 0$ if and only if
\[\varepsilon (v_x-v_y)\geq \Lambda \varepsilon(v_x-v_y) \left(- \sum_{s=1}^{x} f_s +\varepsilon +\sum_{s=y}^m f_s + \sum_{s=x+1}^{y-1} \frac{v_x-2v_s+v_y}{v_x-v_y} \right).\]
Dividing by $\Lambda(v_x-v_y)>0$ yields the result for the inequality. For the statement about indifference, replace all inequalities with equality.
\end{proof}

We now prove the next auxiliary lemma by contradiction.
\vspace{-8pt}\begin{lem}\label{lem:characterize}
\textcolor{black}{%Take a strictly optimal ROL or suppose the attainability distribution has full support and take a weakly optimal ROL. Suppose this ROL ranks school $b$ after school $c$ for $b<c$. Then it ranks the schools $1,...,b-1,b$ in decreasing order.
Consider ROL $(s_1,s_2,\dots,s_m)$, and let it be strictly optimal. Alternatively, suppose the attainability distribution has full support and let it be weakly optimal.
Then, $s_k<s_\ell$ for $k>\ell$ imply that for all $s_i<s_j\leq s_k,$ $i>j$. That is, if the (weakly) optimal ROL ranks school $b$ after school $c$ for $b<c$, then it ranks the schools $1,...,b-1,b$ in decreasing order.}
\end{lem}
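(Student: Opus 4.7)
The natural strategy is proof by contraposition: assume $R=(s_1,\dots,s_m)$ exhibits both a (possibly non-adjacent) inversion $s_\ell>s_k=:b$ with $\ell<k$, and a natural pair $s_i<s_j\le b$ with $i<j$, and then show that $R$ is neither strictly optimal nor, under full support, weakly optimal. The plan is to extract two adjacent-pair swaps, apply Lemma \ref{flip} in opposing directions to obtain bounds $\tfrac{1}{\Lambda}\le A_1$ and $\tfrac{1}{\Lambda}\ge A_2$, and derive a contradiction by comparing $A_1$ and $A_2$.

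From the inversion, let $p^*$ be the largest index in $[\ell,k-1]$ with $s_{p^*}>b$; then $s_{p^*+1}\le b$ and $(p^*,p^*+1)$ forms an adjacent inversion straddling the threshold $b$. Optimality of $R$ against the un-swap at $(p^*,p^*+1)$, via Lemma \ref{flip} with $x=s_{p^*+1}$ and $y=s_{p^*}$, delivers $\tfrac{1}{\Lambda}\le A_1$, where $A_1$ is the Lemma \ref{flip} RHS rewritten in terms of the lottery $f$ of $R$ itself (using $\widetilde f_x=f_x+\widetilde\varepsilon$, $\widetilde f_y=f_y-\widetilde\varepsilon$, and $\widetilde f_s=f_s$ otherwise, to translate the natural-order lottery in Lemma \ref{flip}). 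From the natural pair, the subsequence $s_i,\dots,s_j$ must increase somewhere, giving an adjacent natural pair $(q,q+1)$ with $s_q<s_{q+1}$; choosing $q$ so that $s_{q+1}\le b$ keeps both schools in $\{1,\dots,b\}$. Lemma \ref{flip} in its original direction then yields $\tfrac{1}{\Lambda}\ge A_2$. Strict optimality makes both bounds strict; full support ensures the relevant $\widetilde\varepsilon,\varepsilon_q>0$, which preserves strictness in the weak-optimality case.

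The contradiction arises from $A_1<A_2$. The structural observation driving this is that the first swap straddles $b$ (with $y=s_{p^*}>b$ and $x=s_{p^*+1}\le b$), while the second lies strictly within $\{1,\dots,b\}$ (both $x'=s_q$ and $y'=s_{q+1}$ are $\le b$). Hence the tail sum $\sum_{s\ge y}f_s$ in $A_1$ captures only match probabilities for schools strictly worse than $b$, whereas the analogous $\sum_{s\ge y'}f_s$ in $A_2$ additionally includes probability mass at schools in $[y',b]$; this is the main source of $A_2>A_1$. The middle coefficient-weighted sums, whose coefficients $\frac{v_x+v_y-2v_s}{v_x-v_y}$ lie in $(-1,1)$, are controlled by the same cross-threshold argument.

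The principal obstacle is the bookkeeping in this final comparison: the two RHSs involve summation ranges determined by four distinct labels $s_{p^*},s_{p^*+1},s_q,s_{q+1}$, so a case analysis on their relative ordering is needed to sign each term cleanly. A cleaner alternative I would also try is to bypass Lemma \ref{flip} and work directly with the utility formula \eqref{eq:payoff-distance}: define $R^\star$ by re-sorting the schools of $\{1,\dots,b\}$ into decreasing-label order on the positions they currently occupy in $R$, leaving all other schools fixed. The difference $U(R^\star)-U(R)$ decomposes into pairwise contributions indexed by transpositions within $\{1,\dots,b\}$, and each contribution can be signed using the fact that the presence of the original inversion $s_\ell>b$ already indicates loss aversion strong enough to reward within-$\{1,\dots,b\}$ reversal. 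Showing the summed difference is strictly positive under our hypotheses directly contradicts optimality of $R$ and bypasses the two-lottery matching issue.
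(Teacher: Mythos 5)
Your high-level strategy---extract two adjacent swaps from the hypothesized violation, apply Lemma \ref{flip} in opposite directions, and derive a contradiction by comparing the two right-hand sides---is exactly the paper's, but your selection of the two adjacent pairs has a genuine gap. First, the adjacent natural pair you require, with \emph{both} schools in $\{1,\dots,b\}$, need not exist: in the ROL $(1,3,2)$ with $b=2$ and $c=3$, the schools $\{1,2\}$ appear in increasing order (so the lemma's conclusion is violated), yet the only adjacent increasing pair is $(1,3)$, whose second element exceeds $b$. Second, even when both of your pairs exist, the termwise comparison $A_1<A_2$ is not guaranteed. Writing the straddling inversion as $(y,x)$ with $x\le b<y$ and the natural pair as $(x',y')$ with $x'<y'\le b$, the coefficient of $f_s$ in $A_1$ for $s\in(x,x']$ is $\frac{v_x+v_y-2v_s}{v_x-v_y}>-1$, while in $A_2$ it is $-1$; so whenever $x<x'$ the comparison goes the wrong way on that range, and the two bounds $\frac{1}{\Lambda}\le A_1$ and $\frac{1}{\Lambda}\ge A_2$ are then perfectly compatible (indeed they must be compatible in general, since strictly optimal top-rank monotone ROLs also contain adjacent inversions and adjacent natural pairs). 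The missing ingredient is an interleaving condition on the four labels, $x'\le x<y'\le y$ in your notation, under which every coefficient of $f_s$ in the inversion bound is weakly below the corresponding coefficient in the natural-pair bound. The paper secures this interleaving by choosing the offending triple extremally ($c$ with the highest index, then $a$ with the lowest) and reading the two adjacent pairs off the resulting monotone runs of the list; in particular the paper's natural pair is allowed to have its second element far above $b$, which is precisely what your within-$\{1,\dots,b\}$ requirement rules out.

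Your proposed alternative (re-sort $\{1,\dots,b\}$ in place and sign the pairwise contributions ``because the original inversion already indicates loss aversion strong enough to reward reversal'') does not close the gap either: the bound $\frac{1}{\Lambda}\le A_1$ extracted from one swap does not transfer to other transpositions, because each Lemma \ref{flip} right-hand side depends on the positions and match probabilities specific to that swap. Establishing that transfer is exactly the content of the interleaved termwise comparison you would still need to carry out.
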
 \vspace{-8pt} 
\begin{proof}[Proof of Lemma \ref{lem:characterize}]
Suppose that for some $1\leq a < b <c\leq m$,
the \textcolor{black}{respective} optimal ROL ranks $b$ behind $c$ but $a$ before $b$. 
%and which is strictly preferred to all lists where $k$ is ranked behind $l$ and $l$ behind $n$. 
Let $c$ be the least preferred school, i.e., the one with the
highest index, for which such a triple exists in this ROL. 
Given this $c$, select $b$ and $a$ such that $a$ is the lowest-index school,
i.e., the most preferred one, satisfying the requirement.

Since $a$ is ranked before $b$, the optimal ROL has one of the following forms:
\begin{enumerate}[i)]
  \item $(...,a,...,c,...,b,...)$
  \item $(...,c,...,a,...,b,...)$
\end{enumerate}
We make first considerations for both cases.

i) 
Since, by assumption, $a$ is the lowest-index school ranked before $b$, the list must be increasing from $a$, and eventually decreasing (possibly at $b$) to a number above $a$. Call $\overline x$ the first school
      where the list starting from $a$ has decreased, \textcolor{black}{such that $\overline x$ is the first school after $a$ which has a lower index than its preceding school in the list.} Now, by choosing $\underline x$ appropriately in the list between $a$ and $\overline x$, we obtain in the optimal ROL a sequence $(...,\underline x,\underline
      y,...,\overline y,\overline x,...)$ (with possibly $\underline y=\overline y$), which is increasing from $\underline x$ to $\overline y$ and satisfies $\underline x< \overline x< \underline y \leq \overline y$.

ii) 
Since, by assumption, $c$ is the highest-index school for which there exists $b$ and $a$ with $b$ ranked 
behind $c$ but $a$ before $b$, the list must be decreasing from $c$, 
but eventually increasing (possibly immediately after $a$) to a number below $c$. 
Call $\underline y$ the first school after $c$ where the list is increasing. 
Now, by choosing $\overline y$ appropriately in the list between $c$ and $\underline y$, we obtain in
      the optimal ROL a sequence $(...,\overline y,\overline x,...,\underline x,\underline y,...)$ (with possibly $\overline x=\underline x)$, which is decreasing from $\overline y$ to $\underline x$ and satisfies
      $\underline x\leq \overline x< \underline y < \overline y$.

The rest of the proof is identical for both cases.

Since the ROL is supposed to be \textcolor{black}{(weakly)} optimal, it must be \textcolor{black}{(weakly)} preferred to an otherwise
equivalent ROL that swaps $\overline x$ and $\overline y$.
Let $f_s$ be the matching probabilities as induced by the optimal ROL, and let $\overline{f}_s$ be the matching probabilities as induced by the (otherwise identical) ROL
that flips $\overline x$ and $\overline y$.
By \textcolor{black}{Lemma \ref{lem:acc}}, we obtain $f_s=\overline f_s$ for all $s\neq \overline x,\overline y$, and
\ean{
\overline f_{\overline x} &= f_{\overline x} + \overline \varepsilon \quad \mbox{and} \quad
&\overline f_{\overline y} = f_{\overline y} - \overline \varepsilon \label{eq:probaio},
}
where $\overline\varepsilon$ is the probability that $\overline x$ and $\overline y$ are attainable, but 
any school ranked before $\overline x$ and $\overline y$ in the optimal ROL is not.
By \textcolor{black}{either} strict optimality \textcolor{black}{or the full-support assumption}, $\overline \varepsilon >0$.

Hence, by Lemma \ref{flip} \small
\eq{ \label{Loverline}
\begin{aligned}
\frac{1}{\Lambda}&\textcolor{black}{\leq} - \sum\limits_{s=1}^{\overline x} \overline f_s+ \overline\varepsilon
+\sum\limits_{s=\overline x+1}^{\overline y-1} \overline f_s \: \frac{v_{\overline x} + v_{\overline y} - 2 v_s}{v_{\overline x}- v_{\overline y}} + \sum\limits_{s=\overline y}^{m} \overline f_s \\
&= - \sum\limits_{s=1}^{\overline x}  f_s
+\sum\limits_{s=\overline x+1}^{\overline y-1} f_s \: \frac{v_{\overline x} + v_{\overline y} - 2 v_s}{v_{\overline x}- v_{\overline y}} + \sum\limits_{s=\overline y}^{m}  f_s  - \overline\varepsilon
\end{aligned}
}  \normalsize

Similarly, the ROL must be \textcolor{black}{weakly} preferred to an otherwise
equivalent ROL that swaps $\underline x$ and $\underline y$.
Hence, by Lemma \ref{flip} 
\eq{ \label{Lunderline}
\frac{1}{\Lambda}\textcolor{black}{\geq}
- \sum\limits_{s=1}^{\underline x} f_s+ \underline\varepsilon
+\sum\limits_{s=\underline x+1}^{\underline y-1} f_s \: \frac{v_{\underline x} + v_{\underline y} - 2 v_s}{v_{\underline x}- v_{\underline y}} + \sum\limits_{s=\underline y}^{m} f_s 
}

Both inequalities can only simultaneously hold if the right-hand side of (\ref{Loverline}) is \textcolor{black}{weakly} larger than the right-hand side of (\ref{Lunderline}), which we bring to a contradiction. Since $-\overline\varepsilon<0<\underline\varepsilon$ it suffices to show that for each $s$ the respective summand in (\ref{Loverline}) is \textcolor{black}{weakly} smaller than in  (\ref{Lunderline}).

For $s\in[\underline x +1, \overline x]$, we have $-1= \frac{v_{\underline x} + v_{\underline y} - 2 v_s}{v_{\underline x}-v_{\underline y}}+  2\frac{ v_s-v_{\underline x} }{v_{\underline x}-v_{\underline y}}< \frac{v_{\underline x} + v_{\underline y} - 2 v_s}{v_{\underline x}-v_{\underline y}}$, since $v_{\underline x}>v_s>v_{\underline y}$.
For $s\in[\overline x +1, \underline y-1]$, we have  
\[ \frac{v_{\overline x} + v_{\overline y} - 2 v_s}{v_{\overline x}- v_{\overline y}}  = \frac{1}{1+\frac{2v_s-2v_{\overline y}}{v_{\overline x}+v_{\overline y}-2v_s}}< \frac{1}{1+\frac{2v_s-2v_{\underline y}}{v_{\underline x}+v_{\underline y}-2v_s}}=\frac{v_{\underline x} + v_{\underline y} - 2 v_s}{v_{\underline x}- v_{\underline y}} ,\]
where the inequality follows since $v_{\underline x}>v_{\overline x},~v_{\underline y}>v_{\overline y}$, and the term is increasing in both, $v_{\overline x}$ and $v_{\overline y}$.
For $s\in[\underline y, \overline y-1]$, we have  $\frac{v_{\overline x} + v_{\overline y} - 2 v_s}{v_{\overline x}- v_{\overline y}} =1+ 2\frac{v_{\overline y}-v_s}{v_{\overline x}- v_{\overline y}}<1$ since $v_{\overline x}>v_s>v_{\overline y}$.
\end{proof}

Now, a) of the proposition follows immediately.
That schools preferred over \textcolor{black}{the top-ranked school} $b$ are optimally ranked in decreasing order is just Lemma \ref{lem:characterize}. 
An ROL not ranking schools worse than a \textcolor{black}{the top-ranked school} $a$ in increasing order
would rank some $a<b<c$ in the form of $(a,...,c,....,b,...)$.
As seen by the contradiction of case (i) in the proof of  Lemma \ref{lem:characterize},
this cannot be \textcolor{black}{strictly} optimal \textcolor{black}{and cannot be weakly optimal if the attainability distribution has full support}.

We now establish assertion b). First, by Proposition \ref{prop:bounds} the truthful ROL is always strictly optimal for a sufficiently large attainability probability $p_1$ at the most preferred school.\footnote{Note, that this is not a circular argument, as Proposition \ref{prop:bounds} only builds on Proposition \ref{prop:order}a.}  Hence, we can focus on non-truthful, top-\textcolor{black}{rank} monotone ROLs.

We say that  the joint distribution of attainability probabilities has full support if for any subset of schools without the safe school $m$ the probability of these schools being attainable and these schools only is strictly between zero and one. We now show by induction over the number of schools that for any non-truthful top-\textcolor{black}{rank} monotone ROL there exists an \textcolor{black}{attainability distribution $P$} with full support on the distribution of attainability probabilities such that the ROL is strictly optimal.

For the base case $\mathcal S_2=\{1,2\}$, Lemma \ref{flip} establishes that $(2,1)$ is strictly optimal for any $v_1>v_2$ if $p_1$ satisfies
\[\frac{p_1}{\Lambda}<p_1\left(-p_1+1\right) \quad\Leftrightarrow \quad 0<p_1<1-\frac 1 \Lambda.\]

For the induction step suppose that the statement holds for any set of $m-1$ schools. Let $L$ be an arbitrary non-truthful top-\textcolor{black}{rank} monotone list for the set $\mathcal S_m=\{1,...,m\}$.  Since any such ROL must rank school 1 behind school 2, list $L$ must be of the form $([a],2,[b],1,[c])$, where $[a]$, $[b]$, and $[c]$ stand for some (potentially empty) ordered subsets of schools $\{3,4,\dots,m\}$.

By induction assumption, for the school set  $\mathcal S_{m-1}=\{\tilde 2,3,...,m\}$ there is some \textcolor{black}{$\widetilde P$} with full support such that $\widetilde L =([a],\widetilde 2,[b],[c])$ is strictly optimal.
Intuitively, we now construct the environment for set $\mathcal S_m$ that makes $L$ strictly optimal by splitting school $\widetilde 2$ into two schools, $1$ and $2$. Formally, we define  $v_1>v_{\tilde 2}$ and $v_s=v_{\tilde s}$ for all $s\geq2$. Attainability probabilities are defined exactly as on the set $\mathcal S_{m-1}$ with the additional assumption that whenever school $\tilde 2$ is attainable school 1 but not 2 is attainable with probability $\epsilon$, school $2$ but not $1$ with $1-2\epsilon$, and both schools $1$ and $2$ with probability $\epsilon$.\footnote{More formally, for any $E\in \Pi_{s=3}^m A_s$ we define $P (A_1=A_2=1, E) = \epsilon \widetilde P (A_{\widetilde 2}=1, E), \quad
P (A_1=1,A_2=0, E) = \epsilon \widetilde P (A_{\widetilde 2}=1,  E),
P (A_1=A_2=0, E) = \widetilde P (A_{\widetilde 2}=0,  E), \quad
P (A_1=0,A_2=1, E) = (1-2\epsilon) \widetilde P (A_{\widetilde 2}=1, E)$.} Note that the resulting distribution has full support. We need to show that for suitable choices of $v_1$ and $\epsilon$ list $L$ becomes strictly optimal. Foreshadowing the structure of the argument, we split the remaining proof into proving the following three claims.
\textit{
\begin{itemize}
    \item[(i)] If $p_1=\epsilon p_{\widetilde 2} < \frac{1-1/\Lambda}{2}$, then 
    there is some $\overline v>v_2$ 
such that for $v_1=\overline v$ any optimal ROL ranks
school $1$ after outside option $m$.
\item[(ii)] For sufficiently small $\epsilon$ and $v_1\leq\overline v$, any strictly optimal ROL  of set $\mathcal S_m$ ranks schools $\{2,\dots,m\}$ in the order $([a],2,[b],[c])$.
\item[(iii)] For sufficiently small $\epsilon$ there is some $v_1\in (v_2,\overline v]$ such that $([a],2,[b],1,[c])$ is strictly optimal.
\end{itemize}
}
Proof of Claim (i):
For any ROL that does not rank school 1 last, consider Lemma \ref{flip} for a switch of ranks between school $x=1$ and school $y$ ranked directly after it. Define
\[\alpha(v_1,v_s,v_y)=\frac{v_1+v_y-2v_s}{v_1-v_y}=1-2\frac{v_s-v_y}{v_1-v_y}\]
and let $\underline \alpha(v_1)=\min_{2\leq s<y\leq m} \alpha(v_1,v_s,v_y)$. Note that $\underline\alpha(v_1)$ is strictly increasing in $v_1$ with $\underline\alpha(v_1)\in(-1,1)$ for all $v_1$, and $\lim_{v_1\to\infty}\underline\alpha(v_1)=1$.

Considering the swap of $1$ and $y$, we have \small
\begin{align*} - f_1+ \varepsilon
 +\sum\limits_{s=2}^{y-1} f_s \: \alpha(v_1,v_s,v_y) + \sum\limits_{s=y}^{m} f_s > -f_1 + \underline\alpha(v_1)\sum_{s=2}^m f_s > 1-2f_1 + (\underline\alpha(v_1)-1),
 \end{align*} \normalsize
 as $\sum_{s=2}^m f_s=(1-f_1)$ and $\underline \alpha(v_1)<1$.
 If $p_1<\overline p\equiv  \frac{1-1/\Lambda}{2}$, then $f_1\leq p_1$ implies
 \[ 1-2f_1 + (\overline\alpha(v_1)-1)>1-2\overline p +(\overline\alpha(v_1)-1)=\nicefrac{1}{\Lambda}+(\overline\alpha(v_1)-1).\]
 Since the inequality is strict and $\lim_{v_1\to\infty} (\overline\alpha(v_1)-1)=0$, 
 there is some $\overline v$, for which swapping 1 and $y$ is profitable by Lemma \ref{flip} as $1-2\overline p +(\overline\alpha(v_1)-1)>\nicefrac{1}{\Lambda}$
 and $\varepsilon>0$ by full support. Hence, it is always profitable to switch ranks of 1 and $y$ for any school $y$ ranked behind it, and, by iteration,
 ranking $1$ before $m$ is strictly suboptimal for $v_1= \overline v$. 
 
 Proof of Claim (ii):
 Intuitively, for small $\epsilon$ the position of school 1 is unsubstantial for expected utility such that the optimality of the order $([a],2,[b],[c])$ follows from the strict optimality of $([a],\tilde 2,[b],[c])$.
 
 More formally, let $L'=([d],1,[e])$ be an arbitrary ROL of school set $\mathcal S_m$. We calculate a bound on the utility difference between  $L'$ with attainability distribution $P$ and ROL $\tilde L'=([d],[e])$ under school set $\mathcal S_{m-1}$ and attainability distribution $\widetilde P$ (indentifying school 2 with school $\tilde 2$).

Let $p_1=\epsilon p_{\tilde 2}$ be the unconditional probability of school 1 being attainable. Denote with $(f'_s)_{s_\in\mathcal S_{m}}$ the match probabilities of list $L'$ and with $(\tilde f'_s)_{s_\in\mathcal S_{m-1}}$ the match utilities of list $\tilde L'$. Let further be $\Delta U([d],[e])$ the absolute value of the utility difference between $L'$ and $\tilde L'$. Employing triangle inequality and the fact that the match probabilities with schools $2,...,m$ each differ by less then $p_1$ between ROLs $L'$ and $\tilde L'$, we obtain  by  \ref{eq:payoff-distance}
\small
\begin{align*}
\Delta U([d],[e])&\leq 
\sum_{s=1}^m |\tilde f'_s-f'_s| v_s +\Lambda \sum_{s=1}^m \Big( \left| f'_s \sum_{r=s+1}^m f'_r (v_s-v_r) - \tilde f'_s \sum_{r=s+1}^m \ f'_r(v_s-v_r)\right| \\
&+ \tilde f'_s\sum_{r=s+1}^m p_1 (v_s-v_r) \Big) \\
&<\sum_{s=1}^m p_1 \overline v+ \Lambda \sum_{s=1}^m \left( p_1\overline v\sum_{r=s+1}^m f'_r +  \tilde f_s'(m-s-1) p_1 \overline v \right)\\
&<m p_1 \overline v +\Lambda 2 m p_1 \overline v =\epsilon p_{\tilde 2} m \overline v (1+2\Lambda).
\end{align*}
\normalsize
As $([a],\tilde 2 ,[b],[c])$ is strictly optimal for $\mathcal S_{m-1}$ and $\tilde P$, let $C$ be the utility difference to the second best ROL of $\mathcal S_{m-1}$. Let $\epsilon< \frac{C}{2p_{\tilde 2} m \overline v (1+2\Lambda)}$. Hence, whenever the order of schools $\{2,...,m\}$ in $([d],[e])$ is different to the order in $([a],2,[b],[c])$ we know that the utility of $([a],2,1,[b],[c])$ exceeds the utility of $([d],1,[e])$ by more that 
\[ -\Delta U([a],2,[b],[c])+ C - \Delta U([d],[e]) > - C/2 +C  -C/2 = 0,\]
which shows that any optimal ROL of $n$ schools ranks schools $2,...,m$ in the order $([a],2,[b],[c])$.

 Proof of Claim (iii): By top-\textcolor{black}{rank} monotonicity, $([b],[c])$ is of form $(k,k+1,...,m)$ with  $k\geq 3.$. Also by top-\textcolor{black}{rank} monotonicity, it is suboptimal to rank 1 behind 2. We need to show that for sufficiently small $\epsilon$ any position of school 1 in $(k,k+1,...,m)$ can be achieved as strictly optimal by choosing $v_1$ appropriately. 
We employ Lemma \ref{flip} to consider a swap of school $x=1$ and school $y$ ranked directly behind 1 (if any).
For fixed match probabilities $f_1,...,f_m$, the term in brackets on the right-hand side of \eqref{eq:aux} is strictly decreasing in $y$ for any $v_1$,  since $\alpha(v_1,v_s,v_y)<1$ is increasing in $v_y$ and $v_y$ is decreasing in $y\in\{k,k+1,...,m\}$. Moreover, the amount by which this term decreases depends continuously on $v_1$, and hence the term attains a minimum strictly larger than zero at some $v_1 \in [v_2,\overline v]$. Since the match probabilities $f_s$ and $\varepsilon$ can differ by at most $p_1=\epsilon \widetilde p_{\widetilde 2}$ between two different ROLs it follows that the term in the brackets of \eqref{eq:aux} is also strictly decreasing in $y\in\{k,k+1,...,m\}$ for all $v_1\in[v_2,\overline v]$ for sufficiently small $\epsilon.$ Hence, an order $([a],2,k,...,y-1,1,y,...,m)$ is optimal if and only if there is a $v_1$ for which the term in the brackets of \eqref{eq:aux} is strictly larger than $\frac 1 \Lambda$ for the adjacent swap of $1$ and $y-1$, but strictly smaller than $\frac 1 \Lambda$ for the adjacent swap of $1$ and $y$. Since for each $y$ the term in the brackets of \eqref{eq:aux} is continuously increasing in $v_1$ and decreasing in $y$, such $v_1\in(v_2,\overline v]$ exists by the intermediate value theorem for all $y$ if $([a],2,1,k,...m)$ is optimal for $v_1$ sufficiently close to $v_2$ and $([a],2,k,...,m,1)$ is optimal for $v_1=\overline v$. The latter has been shown in Claim (i). For $v_1$ sufficiently close to $v_2$ the optimality of $L=([a],2,1,k,k+1,...,m)$ follows from the optimality of $\widetilde L=([a],\widetilde 2, k,k+1,...,m)$ for schools $\mathcal S_{m-1}$: By Lemma \ref{flip}, optimality of $\tilde L$ implies for this list
\[\frac{1}{\Lambda}>-f_{\tilde 2}+\varepsilon +\sum_{s=3}^{k-1}f_s \frac{v_{\tilde 2}+v_k-2v_s}{v_{\tilde 2}-v_k} + \sum_{s=k}^m f_s. \]
Since the match probabilities for schools 3,...,m are the same in $([a],2,1,k,k+1,...,m)$ and $\tilde L$, the match probabilities $f_1$ and $f_2$ satisfy $f_1+f_2=f_{\tilde 2}$, and for $v_2=v_{\tilde 2}$, we have 
\begin{align*}
\frac{1}{\Lambda}&>-f_1+\frac{v_2+v_k-2v_2}{v_2-v_k}f_2+\epsilon\varepsilon +\sum_{s=3}^{k-1}f_s \frac{v_{ 2}+v_k-2v_s}{v_{ 2}-v_k} + \sum_{s=k}^m f_s \\
&= \lim_{v_1\to v_2} \left(-f_1+\epsilon\varepsilon+ \sum_{s=2}^{k-1}  \frac{v_{1}+v_k-2v_s}{v_{1}-v_k}f_s + \sum_{s=k}^m f_s  \right),
\end{align*} 
which shows that it is unprofitable to swap $1$ and $k$ in $([a],2,1,k,k+1,...,m)$ when $v_1$ is sufficiently close to $v_2$ . This concludes the proof of Claim (iii).
\end{proof}

\begin{proof}[Proof of Proposition \ref{prop:bounds}]
1.
 Suppose, by way of contradiction, that truth-telling is suboptimal. By Proposition \ref{prop:order}a,
  this implies that there is an optimal ROL which does not rank school 1 first. Let $\overline F=(\overline f_1,...,\overline f_m)$ be the lottery induced by thatl ROL. Let $F=(f_1,...,f_m)$ be the lottery induced by the ROL which ranks school 1 first and all other schools in the same order as the optimal ROL. Let further $\varepsilon_i= f_i-\overline f_i$ be the shifts in probability when choosing $F$ instead of $\overline F$. Evidently, $\varepsilon_i\leq0$ for all $i\geq 2$ and $\varepsilon_1=-\sum_{i=2}^m \varepsilon_i>0$, where the strict inequality comes from the fact that school 1 is not exclusive. To find a contradiction to the optimality of $\overline F$, we show that $U(\cdot,F)>U(\cdot,\overline F)$. By \eqref{eq:payoff-distance},
  \small
\begin{align*}
U(\cdot,F)-U(\cdot,\overline F)&= \sum_{s=1}^m f_s \left[ v_s-\Lambda \sum_{r=s+1}^m f_r(v_s-v_r) \right] - \sum_{s=1}^m \overline f_s \left[ v_s-\Lambda \sum_{r=s+1}^m \overline f_r(v_s-v_r) \right] \\
&\geq \sum_{s=1}^m f_s \left[ v_s-\Lambda \sum_{r=s+1}^m f_r(v_s-v_r) \right] - \sum_{s=1}^m \overline f_s \left[ v_s-\Lambda \sum_{r=s+1}^m  f_r(v_s-v_r) \right] \\
&=  \left(- \sum_{i=2}^m \varepsilon_i\right) \left[ v_1-\Lambda \sum_{r=2}^m f_r(v_1-v_r) \right]  +  \sum_{s=2}^m \varepsilon_s \left[ v_s-\Lambda \sum_{r=s+1}^m f_r(v_s-v_r) \right] \\
&= \sum_{s=2}^m \varepsilon_s \left[ v_s-v_1 +\Lambda \sum_{r=2}^s f_r(v_1-v_r) + \Lambda \sum_{r=s+1}^m f_r(v_1-v_s)  \right] \\
&> \sum_{s=2}^m \varepsilon_s \left[ v_s-v_1 +\Lambda \sum_{r=2}^s f_r(v_1-v_s) \right] \\
&= \sum_{s=2}^m \varepsilon_s \left[ v_s-v_1 +\Lambda (1- p_1) (v_1-v_s) \right] >0,
\end{align*}
\normalsize
where the last inequality exploits that, by assumption, $p_1>1-1/\Lambda$.

2.
Suppose, by way of contradiction, that truth-telling is strictly optimal. 
Hence, $(1,2,...,m) \succ_i (2,1,...,m)$, and by Lemma \ref{flip}
  \[\frac{\varepsilon}{\Lambda}\geq \varepsilon\left(-f_1+\varepsilon+\sum_{s=2}^{m}f_s\right),\]
where $\varepsilon$ is the probability that the student is attainable at schoo 1 and school 2.
Since school 1 is not exclusive, we have $\varepsilon>0$, and hence
  \[\frac{1}{\Lambda}\geq -f_1+\varepsilon+\sum_{s=2}^{m}f_s=-f_1+\varepsilon+(1-f_1)>1-2f_1=1-2p_1,\]
  which can be rearranged to $p_1 > 0.5\left(1-\nicefrac{1}{\Lambda}\right)$, a contradiction.
\end{proof}

% STRATEGIC INTERACTION

{\color{black}
\begin{proof}[Proof of Proposition \ref{nash_exist}]
A mixed strategy $\sigma_i$ can be identified as an element of the space $(\Delta(\mathfrak{S}(\mathcal{S}))\big)^{|\Theta_i|}$, which is a compact subset of $\mathbb R^{ m!\cdot |\Theta_i|}$. It is convenient to think about such an element as a matrix which specifies for each ROL $r\in\mathfrak{S}(\mathcal{S})$ and type $\theta_i\in\Theta_i$ the probability with which $r$ is played under mixed strategy $\sigma_i(\theta_i)$. Denote this probability, with slight abuse of notation, by $\sigma_i(\theta_i,r)$.

%It is convenient to think of such a mixed strategy as a probability matrix where the row indicates the type $\theta_i$ and the column indicates the ROL, such that entry $c_{jk}$ of matrix $\sigma_i$ represents the probability with which ROL $k$ is played from type $j=\theta_i\in\Theta_i$. 

For any type profile $(\theta_{i_1},...,\theta_{i_n})$ (more precisely, for the priority vectors implied by the type profile) and any pure strategy vector $(r_{i_1},...,r_{i_n})\in\mathfrak{S}(\mathcal{S})^n$ the rules of the mechanism calculate the match outcome for each student. Denote the corresponding match outcome for student $i$ with $s_i((\theta_{i_1},...,\theta_{i_n}),(r_{i_1},...,r_{i_n}))$. Hence, for each student $i$ of type $\theta_i$ submitting ROL $r_i$, her belief about $\theta_{-i}$ conditional on $\theta_i$ leads to a belief about match probabilities $(f_{i,s})$ for each mixed strategy vector profile of all other students. Formally, for given $\theta_i$ and $r_i$, 
% \begin{equation}
%     f_{i,s}=\sum_{\theta_{-i}\in\Theta_{-i}}\sum_{r_{-i}\in  \mathfrak{S}(\mathcal{S})^{n-1}}\text{Prob}(\Theta=\theta|\theta_i)\cdot\prod_{j\in\mathcal I\setminus \{i\}}\sigma_j(\theta_j,r_j)\cdot \mathbbm{1}_{s=s_i((\theta_{i_1},...,\theta_{i_n}),(r_{i_1},...,r_{i_n}))}.
% \end{equation} 
{\color{black}
\begin{equation}
    f_{i,s}=\sum_{\widehat \theta_{-i}\in\Theta_{-i}}\sum_{r_{-i}\in  \mathfrak{S}(\mathcal{S})^{n-1}}\text{Prob}(\theta_{-i}=\widehat \theta_{-i})|\theta_i)\cdot\prod_{j\in\mathcal I\setminus \{i\}}\sigma_j(\widehat \theta_j,r_j)\cdot \mathbbm{1}_{s=s_i((\theta_i,\widehat \theta_{-i}),(r_{i_1},...,r_{i_n}))}.
\end{equation} }

Evidently, match probabilities $f_{i,s}$ are continuous in each component $(\theta_j,r_j)$. Since, by \eqref{eq:payoff-distance}, student $i$'s utility is continuous in the match probabilities, it follows that each student $i$'s expected utility $\hat U_i(\theta_i,r_i,\sigma_{-i})\equiv U_i(\theta_i,(f_{i,s}(\theta_i,r_i,\sigma_{-i}))_{s\in\mathcal S})$ is continuous in each $\theta_i$, $r_i$, and each component of $\sigma_{-i}$. 

For a mixed strategy profile $\sigma=(\sigma_{i_1},...,\sigma_{i_n})$, define the function\footnote{\textcolor{black}{In contrast to the case with standard preferences, this function is not the expected utility of the match distribution from the mixed strategy $\sigma_i(\theta_i)$. With our interpretation of mixed strategies, it can be interpreted as the average utility an outsider would assign to student $i$ if the outsider assigns probability $\sigma_i(\theta_i,r_i)$ to the event that the student picks the pure strategy $r_i$.}}
\[p_i(\sigma,\theta_i)=\sum_{r_i\in \mathfrak{S}(\mathcal{S})} \sigma_i(\theta_i,r_i)\hat U_i(\theta_i,r_i,\sigma_{-i}).\]

Next, define for each $i$, $\theta_i$, and $r$ the function
\[\varphi_i(r,\theta_i,\sigma)=\max\{0,\hat U_i(\theta_i,r,\sigma_{-i})-p_i(\sigma,\theta_i)\}.\] 
This function is continuous in $\sigma$. Hence, the function $T$ on the set of strategy profiles defined by $T(\sigma)=\sigma'$, where for each component
\begin{equation}\label{Nash}
    \sigma_i'(\theta_i,r)=\frac{\sigma_i(\theta_i,r)+\varphi_i(r,\theta_i,\sigma)}{1+\sum_{r\in \mathfrak{S}(\mathcal{S})}\varphi_i(r,\theta_i,\sigma)},
\end{equation}
is a continuous function from a compact subspace of a finite dimensional vector space to itself.\footnote{\textcolor{black}{Intuitively, this function maps a strategy profile $\sigma$ to a another strategy profile $\sigma'$, \textcolor{black}{which for each agent increases the probability with which actions are played that increase the average utility.}}} Hence, by Brower's fixed-point theorem, it has a fixed point $\overline\sigma$. For each $i$ and $\theta_i$, let $r(i,\theta_i)$ be a ROL that minimizes $\hat U(\theta_i,r,\overline\sigma_{-i})$ among those ROLs played with positive probability under $\overline\sigma_i(\theta_i)$. Hence, $\varphi_i(r(i,\theta_i),\theta_i,\overline\sigma)=0$. Then, since $\overline\sigma$ is a fixed point, the denominator in (\ref{Nash}) for $\sigma=\overline\sigma$ must be one, and hence $\varphi_i(r,\theta_i,\overline\sigma)=0$ for all $i,\theta_i,$ and $r$. It follows that for all $i$ and $\theta_i$ all pure strategies $r_i$ that are played with positive probability under $\overline\sigma_i(\theta_i)$ induce the same utility $\hat U_i(\theta_i,r_i,\overline\sigma)$, and no other pure strategy action induces strictly higher utility. Hence, given $\overline\sigma_{-i}$ and $\theta_i$, any strategy that has positive probability in $\overline\sigma_i(\theta_i)$ is as CPE for player $i$, and the strategy profile $\overline\sigma$ indeed constitutes a CBNE.
\end{proof}
}

\begin{proof}[Proof of Lemma \ref{lem:elite}]
By \eqref{eq:payoff-distance},
an ROL which lists any subset of elite schools before the outside option induces an expected utility of $fv-\Lambda f(1-f)v$, where $f$ is the probability that at least one elite school of the subset is attainable. Since the utility is a convex function in $f$, it is maximized by either maximizing or minimizing $f$. Hence, by either listing all or none of the elite schools before the outside option.
\end{proof}

\begin{proof}[Proof of Lemma \ref{lem:CBNE_elite}]
The probability that there are less than 
capacity $q$ students with a score above $\omega$ among $(n-1)$ students,
\eq{ \label{eq:order-stat}
P_{n-1:q}(\omega):=\sum_{k=0}^{q-1}{n-1 \choose k} (1- G(\omega))^k G(\omega)^{n-1-k},
}
is continuously and monotonically increasing in $\omega$ from 0 to 1. 
%By the intermediate value theorem, 
Thus, there is a unique $\overline \omega^{l}$
such that $P_{n-1:q}(\overline{\omega}^l) = 1-\nicefrac{1}{\Lambda^l} \in (0,1)$.
Because $f(\omega)$ is minimal when all other students choose to apply, $f(\omega) \geq P_{n-1:q}(\omega)$
for all $\omega$ and all reporting strategies of the other students.
Hence, for any $\Lambda\leq\Lambda^l$ and any $\omega\geq\overline{\omega}^l$, 
we have $f(\omega) \geq P_{n-1:q}(\omega) \geq 1-\nicefrac{1}{\Lambda}$, 
meaning that applying to the elite school is a best response for all such types, as \eqref{eq:elite} holds.
  
Knowing that all students of score $\omega\geq \overline{\omega}^l$ apply, 
a student of type $\Lambda^l$ infers that for score $\omega\geq \overline{\omega}^l$ she has attainability probability $f(\omega)=P_{n-1:q}(\omega)$, and by construction applies if and only if her score satisfies 
$\omega\geq \overline{\omega}^l$. 

Next, because $\Lambda^{l-1}<\Lambda^l$, there are types $\omega<\overline{\omega}^l$ 
who prefer to apply as well.
A student of score $\omega<\overline{\omega}^l$ and sufficiently close to $\overline{\omega}^l$ 
expects acceptance if there are less than $q$ other students with  
a score either above $\overline{\omega}^l$ or a score in $[\omega,\overline{\omega}^l]$ and $\Lambda < \Lambda^l$. 
Again, this attainability probability is strictly and continuously increasing in $\omega$ 
which implies a unique cutoff 
$\overline{\omega}^{l-1}$ such that $f(\overline{\omega}^{l-1})=1-\nicefrac{1}{\Lambda^{l-1}}$. 
Hence, truthful reporting for type $\Lambda^{l-1}$ is optimal if and only if 
$\omega>\overline{\omega}^{l-1}$. Proceeding with this manner iteratively, 
we obtain an essentially unique choice-acclimating Bayesian Nash equilibrium.
\end{proof}

\begin{proof}[Proof of Lemma \ref{no_static}]
If DA is truthful, it is a static mechanism that 
implements the student-optimal stable allocations for all realizations of preferences. 
For the converse, take any static mechanism $(R,o)$ %(Definition in Section \ref{sec:model-appendix})
that implements the student-optimal outcome %(that is, the outcome of DA) 
as CBNE for all realizations.  
More precisely, for each student $i$ there exists a strategy
$\sigma_i:\mathfrak{S}(\mathcal{S})\rightarrow R_i$ such that the joint strategy profile is a CBNE given $o$.
Consequently, the associated direct mechanism 
$\Big( \prod \mathfrak{S}(\mathcal S), o\circ (\sigma_1,...,\sigma_n) \Big)$ 
has a truthful CBNE by construction, and implements the student-optimal stable allocation. This direct mechanism asks students for their type vector and implements the student-optimal stable match (hence the DA match outcome) based on the ordinal preferences of their cardinal utility vector. As in this truthful equilibrium students have no incentive to misrepresent their cardinal utility vector $\mathbf v_i$ they have a forteriori no incentive to misrepresent in DA, which only asks for ordinal preferences to implement the student-optimal stable match. Hence, DA is truthful. 
\end{proof}

\bibliographystyle{elsarticle-harv}
{\singlespacing 
\small
\bibliography{lossaversion-bib}
}

\begin{landscape}
\thispagestyle{empty}
\subsection{Data from \citet{li2017}}

\begin{table}[h!]
\begin{adjustbox}{center,scale=0.9}
\begin{tabular}{|l|llllllllllllllllllll|ll|}
\cline{1-23}
        & \multicolumn{22}{c|}{PRIORITY SCORES}                                                                                                                                                                                                                                                                                                    \\ \hline
ROLs    & \multicolumn{2}{c|}{1}        & \multicolumn{2}{c|}{2}      & \multicolumn{2}{c|}{3}      & \multicolumn{2}{c|}{4}      & \multicolumn{2}{c|}{5}       & \multicolumn{2}{c|}{6}      & \multicolumn{2}{c|}{7}       & \multicolumn{2}{c|}{8}      & \multicolumn{2}{c|}{9}      & \multicolumn{2}{c|}{10} & \multicolumn{2}{c|}{ALL}     \\ \hline
\rowcolor[HTML]{9AFF99}
1234    & 55          & 61.1\%          & 48         & 57.1\%         & 47         & 58.8\%         & 42         & 67.7\%         & 32         & 55.2\%          & 49         & 79.0\%         & 58         & 74.4\%          & 48         & 85.7\%         & 59         & 84.3\%         & 73       & 91.3\%       & 511         & 71.0\%         \\
1243    & 1           & 1.1\%           & 1          & 1.2\%          & 1          & 1.3\%          & 0          & 0.0\%          & 0          & 0.0\%           & 1          & 1.6\%          & 1          & 1.3\%           & 0          & 0.0\%          & 1          & 1.4\%          & 0        & 0.0\%        & 6           & 0.8\%          \\
1324    & 2           & 2.2\%           & 3          & 3.6\%          & 2          & 2.5\%          & 1          & 1.6\%          & 2          & 3.4\%           & 0          & 0.0\%          & 1          & 1.3\%           & 0          & 0.0\%          & 1          & 1.4\%          & 0        & 0.0\%        & 12          & 1.7\%          \\
1342    & 1           & 1.1\%           & 0          & 0.0\%          & 0          & 0.0\%          & 0          & 0.0\%          & 1          & 1.7\%           & 0          & 0.0\%          & 0          & 0.0\%           & 0          & 0.0\%          & 0          & 0.0\%          & 1        & 1.3\%        & 3           & 0.4\%          \\
1423    & 0           & 0.0\%           & 1          & 1.2\%          & 0          & 0.0\%          & 1          & 1.6\%          & 1          & 1.7\%           & 0          & 0.0\%          & 0          & 0.0\%           & 0          & 0.0\%          & 0          & 0.0\%          & 0        & 0.0\%        & 3           & 0.4\%          \\
1432    & 0           & 0.0\%           & 0          & 0.0\%          & 0          & 0.0\%          & 0          & 0.0\%          & 0          & 0.0\%           & 0          & 0.0\%          & 0          & 0.0\%           & 0          & 0.0\%          & 0          & 0.0\%          & 1        & 1.3\%        & 1           & 0.1\%          \\
\rowcolor[HTML]{9AFF99}
2134    & 1           & 1.1\%           & 1          & 1.2\%          & 3          & 3.8\%          & \textbf{4} & \textbf{6.5\%} & \textbf{7} & \textbf{12.1\%} & \textbf{5} & \textbf{8.1\%} & \textbf{8} & \textbf{10.3\%} & \textbf{4} & \textbf{7.1\%} & \textbf{4} & \textbf{5.7\%} & 1        & 1.3\%        & \textbf{38} & \textbf{5.3\%} \\
2143    & 0           & 0.0\%           & 1          & 1.2\%          & 3          & 3.8\%          & 0          & 0.0\%          & 1          & 1.7\%           & 0          & 0.0\%          & 0          & 0.0\%           & 0          & 0.0\%          & 0          & 0.0\%          & 1        & 1.3\%        & 6           & 0.8\%          \\
\rowcolor[HTML]{9AFF99}
2314    & 1           & 1.1\%           & 2          & 2.4\%          & 2          & 2.5\%          & 1          & 1.6\%          & 2          & 3.4\%           & 1          & 1.6\%          & 0          & 0.0\%           & 2          & 3.6\%          & 1          & 1.4\%          & 1        & 1.3\%        & 13          & 1.8\%          \\
\rowcolor[HTML]{9AFF99}
2341    & 0           & 0.0\%           & 0          & 0.0\%          & 0          & 0.0\%          & 2          & 3.2\%          & 0          & 0.0\%           & 0          & 0.0\%          & 0          & 0.0\%           & 0          & 0.0\%          & 1          & 1.4\%          & 0        & 0.0\%        & 3           & 0.4\%          \\
2413    & 0           & 0.0\%           & 1          & 1.2\%          & 2          & 2.5\%          & 0          & 0.0\%          & 0          & 0.0\%           & 0          & 0.0\%          & 2          & 2.6\%           & 0          & 0.0\%          & 0          & 0.0\%          & 0        & 0.0\%        & 5           & 0.7\%          \\
2431    & 0           & 0.0\%           & 0          & 0.0\%          & 0          & 0.0\%          & 0          & 0.0\%          & 0          & 0.0\%           & 1          & 1.6\%          & 0          & 0.0\%           & 0          & 0.0\%          & 0          & 0.0\%          & 0        & 0.0\%        & 1           & 0.1\%          \\
3124    & 1           & 1.1\%           & 2          & 2.4\%          & 2          & 2.5\%          & 1          & 1.6\%          & 3          & 5.2\%           & 0          & 0.0\%          & 4          & 5.1\%           & 0          & 0.0\%          & 1          & 1.4\%          & 0        & 0.0\%        & 14          & 1.9\%          \\
3142    & 0           & 0.0\%           & 0          & 0.0\%          & 0          & 0.0\%          & 0          & 0.0\%          & 0          & 0.0\%           & 0          & 0.0\%          & 0          & 0.0\%           & 0          & 0.0\%          & 0          & 0.0\%          & 0        & 0.0\%        & 0           & 0.0\%          \\
\rowcolor[HTML]{9AFF99}
3214    & 6           & 6.7\%           & 5          & 6.0\%          & \textbf{6} & \textbf{7.5\%} & 3          & 4.8\%          & 2          & 3.4\%           & 0          & 0.0\%          & 0          & 0.0\%           & 1          & 1.8\%          & 0          & 0.0\%          & 0        & 0.0\%        & 23          & 3.2\%          \\
\rowcolor[HTML]{9AFF99}
3241    & 0           & 0.0\%           & 0          & 0.0\%          & 1          & 1.3\%          & 0          & 0.0\%          & 0          & 0.0\%           & 0          & 0.0\%          & 1          & 1.3\%           & 0          & 0.0\%          & 0          & 0.0\%          & 0        & 0.0\%        & 2           & 0.3\%          \\
3412    & 0           & 0.0\%           & 0          & 0.0\%          & 1          & 1.3\%          & 0          & 0.0\%          & 2          & 3.4\%           & 0          & 0.0\%          & 0          & 0.0\%           & 0          & 0.0\%          & 0          & 0.0\%          & 0        & 0.0\%        & 3           & 0.4\%          \\
\rowcolor[HTML]{9AFF99}
3421    & 3           & 3.3\%           & 2          & 2.4\%          & 0          & 0.0\%          & 0          & 0.0\%          & 0          & 0.0\%           & 0          & 0.0\%          & 0          & 0.0\%           & 0          & 0.0\%          & 0          & 0.0\%          & 0        & 0.0\%        & 5           & 0.7\%          \\
4123    & 1           & 1.1\%           & 2          & 2.4\%          & 1          & 1.3\%          & 0          & 0.0\%          & 1          & 1.7\%           & 2          & 3.2\%          & 1          & 1.3\%           & 0          & 0.0\%          & 0          & 0.0\%          & 1        & 1.3\%        & 9           & 1.3\%          \\
4132    & 0           & 0.0\%           & 1          & 1.2\%          & 0          & 0.0\%          & 0          & 0.0\%          & 0          & 0.0\%           & 0          & 0.0\%          & 0          & 0.0\%           & 0          & 0.0\%          & 0          & 0.0\%          & 0        & 0.0\%        & 1           & 0.1\%          \\
4213    & 1           & 1.1\%           & 1          & 1.2\%          & 0          & 0.0\%          & 1          & 1.6\%          & 3          & 5.2\%           & 1          & 1.6\%          & 1          & 1.3\%           & 0          & 0.0\%          & 0          & 0.0\%          & 0        & 0.0\%        & 8           & 1.1\%          \\
4231    & 1           & 1.1\%           & 2          & 2.4\%          & 2          & 2.5\%          & 0          & 0.0\%          & 0          & 0.0\%           & 0          & 0.0\%          & 0          & 0.0\%           & 1          & 1.8\%          & 0          & 0.0\%          & 0        & 0.0\%        & 6           & 0.8\%          \\
4312    & 0           & 0.0\%           & 4          & 4.8\%          & 4          & 5.0\%          & 3          & 4.8\%          & 0          & 0.0\%           & 0          & 0.0\%          & 0          & 0.0\%           & 0          & 0.0\%          & 0          & 0.0\%          & 1        & 1.3\%        & 12          & 1.7\%          \\
\rowcolor[HTML]{9AFF99}
4321    & \textbf{16} & \textbf{17.8\%} & \textbf{7} & \textbf{8.3\%} & 3          & 3.8\%          & 3          & 4.8\%          & 1          & 1.7\%           & 2          & 3.2\%          & 1          & 1.3\%           & 0          & 0.0\%          & 2          & 2.9\%          & 0        & 0.0\%        & 35          & 4.9\%          \\ \hline
Total   & 90          & 100.0\%         & 84         & 100.0\%        & 80         & 100.0\%        & 62         & 100.0\%        & 58         & 100.0\%         & 62         & 100.0\%        & 78         & 100.0\%         & 56         & 100.0\%        & 70         & 100.0\%        & 80       & 100.0\%      & 720         & 100.0\%        \\ \hline
misrep' & 35          & 38.9\%          & 36         & 42.9\%         & 33         & 41.3\%         & 20         & 32.3\%         & 26         & 44.8\%          & 13         & 21.0\%         & 20         & 25.6\%          & 8          & 14.3\%         & 11         & 15.7\%         & 7        & 8.8\%        & 209         & 29.0\%         \\
\rowcolor[HTML]{9AFF99} T\textcolor{black}{R}M      & 82          & 91.1\%          & 65         & 77.4\%         & 62         & 77.5\%         & 55         & 88.7\%         & 44         & 75.9\%          & 57         & 91.9\%         & 68         & 87.2\%          & 55         & 98.2\%         & 67         & 95.7\%         & 75       & 93.8\%       & 630         & 87.5\%         \\ \hline
\end{tabular}
\end{adjustbox}
\caption{Absolute and relative frequency of all ROLs for each priority score in the experiment by \citet{li2017}.
The top-\textcolor{black}{rank} monotone ROLs are marked, and the frequencies of the most common misrepresentations for each priority score are in bold.
}

\label{tab:li2}
\end{table}
\end{landscape}

\end{document}